\def\powertower#1#2{#1\ifnum#2>1 ^{\powertower{#1}{\numexpr#2-1\relax}}\fi}
\tikzset{
    >=stealth',
    punkt/.style={
           rectangle,
           rounded corners,
           draw=black, very thick,
           text width=6.5em,
           minimum height=2em,
           text centered},
    pil/.style={
           ->,
           thick,
           shorten <=2pt,
           shorten >=2pt,}
}
\tikzset{edge/.style = {->,> = latex'}}
\newtheorem{thm}{Theorem}[section]
\newtheorem{lm}[thm]{Lemma}
\newtheorem{qq}{Problem}
\newtheorem{eg}{Example}
\newtheorem{defi}{Definition}
\newenvironment{proof}[1][Proof]{\begin{trivlist}
\item[\hskip \labelsep {\bfseries #1}]}{\end{trivlist}}
\newenvironment{definition}[1][Definition]{\begin{trivlist}
\item[\hskip \labelsep {\bfseries #1}]}{\end{trivlist}}
\newcommand{\qed}{\nobreak \ifvmode \relax \else
      \ifdim\lastskip<1.5em \hskip-\lastskip
      \hskip1.5em plus0em minus0.5em \fi \nobreak
      \vrule height0.75em width0.5em depth0.25em\fi}
\newcommand{\kibitz}[2]{\ifnum\Comments=1{\color{#1}{#2}}\fi}
\newcommand{\omt}[1]{}
\DeclareMathOperator{\cut}{cut}
\DeclareMathOperator{\eval}{eval}
\renewcommand{\(}{\left(}
\renewcommand{\)}{\right)}
\begin{document}

\title{Fair Division via Social Comparison} 
\author{Rediet Abebe, Jon Kleinberg, and David C. Parkes}
\date{}

\maketitle
\begin{abstract}

  \justify 

In the classical problem of cake cutting (also known as
fair division), a resource must be divided among agents with
different utilities so that each agent believes they have
received a fair share of the resource relative to the other agents.
We introduce a variant of the problem in which there is a graph
on the agents modeling an underlying social network, and agents
only evaluate their shares relative to their neighbors' in the network.
This formulation captures many situations in which it is unrealistic
to assume a global view by the agents, and we also find that
it exposes interesting phenomena in the original problem.

Specifically, we say that an
allocation is {\em locally envy-free} if no agent envies a
neighbor's allocation and {\em locally proportional} if each agent
values her own allocation as much as the average value of her
neighbor's allocations, with the former implying the latter. While
global envy-freeness implies local envy-freeness, global
proportionality does not imply local proportionaity, or vice versa.
A general result is that for any two distinct graphs on the same set
of nodes and an allocation, there exists a set of valuation
functions such that the allocation is locally proportional on one
but not the other.

%

We fully characterize the set of graphs for which an oblivious
single-cutter protocol--- a protocol that uses a single agent to cut
the cake into pieces ---admits a bounded protocol for locally
envy-free allocations in the Robertson-Webb model, and we give a
protocol with $O(n^2)$ query complexity.  
We also consider the {\em price of envy-freeness}, which compares
the total utility of an optimal allocation to the best utility of an allocation 
that is envy-free.
We show that a lower bound 
of $\Omega(\sqrt{n})$ on the price of envy-freeness for global
allocations~\cite{ckkk} in fact holds for local
envy-freeness in any connected undirected graph.  In this sense,
sparse graphs surprisingly do not provide more flexibility with
respect to the quality of envy-free allocations.
\end{abstract}

\justify

\section{Introduction}\label{intro}

The fair allocation of resources is a fundamental problem 
for interacting collections of agents.
A central issue in fair allocation is the process by which
each agent compares her allotment to those of others'.
While theoretical models have tended to focus on global comparisons ---
in which an agent makes comparisons to the full population ---
a rich line of empirical work with its origins in the social sciences has
suggested that in practice, individuals often focus their comparisons
on their social network neighbors.
This literature, known as \emph{social comparison theory},
dates back to work of Festinger \cite{scp}, and has
been explored extensively by economists and sociologists
since; for example, see Akerlof \cite{sdsd} and Burt \cite{burt}.
The primary argument is that in many contexts, an individual's
view of their subjective well-being is based on a comparison 
with peers, defined through an underlying social network structure,
rather than through comparison with the overall population
\cite{mcbride}.

In this work, we find that the perspective of social comparison
theory motivates a rich set of novel theoretical questions in classical 
resource allocation problems.
%
In particular, we consider the \emph{cake cutting} problem, which
refers to the challenge of allocating a single divisible, continuous,
 good in a fair and efficient manner. The ``cake''  is intended
to stand for a good over which different agents have different
preferences for difference pieces. 
This problem has a
wide range of applications including international border settlements,
divorce and inheritance settlements, and allocating shared
computational resources. 

\omt{A historically prominent instance took
place at the Potsdam conference, held after the defeat of Germany in
1945. One of the goals of the conference was to divide up Berlin and
Austria into four occupation zones \cite{btbook}.}
%
%

Agent preferences are modeled through functions that map subintervals
of the $[0, 1]$ interval, which represents the entire cake, to real
numbers according to the value the agent assigns to that piece.
%
We  normalize these
valuations so that each agent's value for the whole cake is $1$.
The entire cake is to be allocated, but agents need not 
receive a single, continguous interval (and valuations are
additive across pieces).  

Following our goal of understanding the properties of 
local comparisons to network neighbors,
we study cake cutting in a setting where there is an
underlying network on the agents, and fairness considerations are
defined locally to an agent's neighbors' in the network.
Given a graph $G$ and a cake
$[0, 1]$ to be allocated, we define a \emph{locally proportional
  allocation} to be one where each agent values her allocation 
at least as much as the average value from the allocations given
to her neighbors in $G$. We define
a \emph{locally envy-free allocation} to be one where no agent envies
the allocation of any neighbor in $G$.
Analogous to graphical games~\cite{kearnsAGTbook}, it seems quite plausible
that agents may care about fairness within a local part of the
population rather than with respect to the population as a whole.

As in the global case, it is straightforward to see that a locally
envy-free allocation for $G$ is also locally proportional for $G$.
It is also clear that if $H$ is a subgraph of $G$ on the same node set
(i.e. if it contains a subset of the edges of $G$), then a
locally envy-free allocation for $G$ must also be locally envy-free for $H$,
since the constraints defining local envy-freeness for $G$ are a
superset of the constraints defining local envy-freeness for $H$.
For local proportionality, however, the constraints for different graphs
$G$ and $H$ (even when one is a subgraph of the other) can operate
quite differently, and so as a first question we ask:

\begin{qq}\label{p0}
  For graphs $G$ and $H$, 
  what is the relationship between the set of locally proportional
  allocations for $G$ and the set of locally proportional
  allocations for $H$?
\end{qq}


Network topology plays a crucial role in our results. Note that
if the network under consideration is the complete graph $K_n$, then the
local definitions coincide with their global analogues; in this sense,
the local formulations contain the standard definitions as special
cases.  At the other extreme, if the network is the empty graph $I_n$,
then any allocation satisfies local envy-freeness and proportionality.
In light of this, we can pose the following problem:

\begin{qq}\label{p1}
  Are there non-trivial classes of graphs for which we can give
  efficient protocols for locally envy-free or at least locally
  proportional allocations?
\end{qq}

We believe that, in addition to capturing real-world contexts,
posing the cake cutting problem on a network will give further insight
into the structure of the original problem.
%
%
%

We also consider the effect of  fairness on welfare as
measured through the \emph{price of fairness}~\cite{ckkk}.
The price of fairness is defined as the worst case ratio over all
inputs between the social welfare of the optimal allocation (the
allocation maximizing the sum of agent valuations) and the social
welfare of the optimal fair allocation--- the envy-free or proportional
allocation that maximizes the sum of agent valuations.
We will refer to this ratio in the case of  envy-free allocations
as the {\em price of envy-freeness}, and the ratio in the case
of proportional allocations as the {\em price of proportionality}.
When there is an underlying graph $G$ governing the comparisons,
these ratios become the {\em price of local envy-freeness} and the
{\em price of local proportionality} for $G$. We pose the following question:
\begin{qq}\label{p2}
  How do the achievable lower and upper bounds on the price of 
  local envy-freeness and local proportionality 
  depend on the structure of the graph $G$?
\end{qq}

Caragiannis et al.~\cite{ckkk} give an $\Omega(\sqrt{n})$ lower bound
for the price of global envy-freeness and proportionality, a matching
upper bound of $O(\sqrt{n})$ for the price of global proportionality,
and a loose upper bound of $n - 1/2$ for the price of global
envy-freeness.  \smallskip

\paragraph*{\bf Overview of Results.}
With respect to Problem~\ref{p0}, we show that in fact the set of
locally proportional allocations do not satisfy any natural containment
relations.  In particular, for any two distinct connected graphs 
$G$ and $H$ on the same set of nodes, there exists a set of valuations
for the agents and an allocation that is locally proportional for $G$
but not for $H$.  Note that this includes the case where $G$ is the
complete graph, and so global proportionality does not imply 
local proportionality on any other connected graph $H$.

For Problem~\ref{p1}, we start from the structure of the classical 
Cut-and-Choose solution for two agents: one agent divides the cake and
the other selects a piece.  This solution does not provide any
guarantees for global allocations with more than two agents, but
with other underlying graphs $G$ it turns out to have a natural
and non-trivial generalization.
Specifically, we fully characterize the family of graphs $G$ for
which a locally envy-free allocation can be produced a by a
protocol in which a single designated node performs all the cuts
at the outset, based only on its valuation function.

Finally, for Problem~\ref{p2}, we start from the $\Omega(\sqrt{n})$ lower
bound on the price of global envy-freeness, which 
provides a lower bound for the price of
local envy-freeness in complete graphs.  We show that this
$\Omega(\sqrt{n})$ lower bound on the price of local envy-freeness
holds in any connected, undirected graph $G$.  We consider this to be
surprising, because one might think that sparse graphs provide more
flexibility with respect to the quality of envy-free allocations.
The known upper bound for the price of global envy-freeness
serves as a loose upper bound for the price of local proportionality
and the price of local envy-freeness.  \medskip

\omt{These include directed
graphs for which all cycles pass through a common node (we refer to
these as {\em cones} of directed acyclic graphs), directed graphs in
which where each node is included in at most one cycle, and directed
pseudoforests.  We also show how these local definitions allow for
protocols for certain proper subgraphs of $K_4$ that are much simpler
than the recent Aziz-Mackenzie protocol (which is designed for the
hard case of $K_4$).  Finally, we present a case--- a subclass of
generalized windmill graphs ---in which we are able to give a locally
proportional protocol, but do not know a corresponding bounded locally
envy-free protocol.}

In outline, Section \ref{prelim} defines local envy-freeness and local
proportionality, and shows that the former implies the latter, and
that both are implied by global envy-freeness.  This section also
establishes a lack of equivalence between global proportionality and
either of the two local fairness concepts. 
Section~\ref{efna} considers the notion of protocols with a single cutter
as described above, and it characterizes the family of graphs
(formally, cones of directed acyclic graphs and their subgraphs)
for which such a protocol can yield a locally envy-free allocation.
Section~\ref{pof} turns to the price of fairness. We conclude by
giving a list of open directions that we hope Problems \ref{p0}, \ref{p1}, 
and \ref{p2} will inspire.

\subsection{Background on the Cake Cutting Problem}\label{related}

Cake cutting algorithms can be traced back to the Hebrew Bible. In the
Book of Genesis, when Abraham and Lot decided to separate, they were
presented with the challenge of dividing up the land. Abraham
suggested to mark the place on which they stand as the cutting point
and asks Lot to pick a side. Lot chose the side that is well-watered
and Abraham went in the opposite direction. This is precisely the 
Cut-and-Choose Protocol from above, which has been shown to give 
envy-free allocations for two agents. 

The cake cutting problem has a
wide range of applications including international border settlements,
divorce and inheritance settlements, and allocating shared
computational resources. 
The formal
study of the cake cutting problem  was
initiated in the 1940s due to Banach et al.~\cite{btbook,kna,ste}. 
%
Later, Steinhaus
observed that the Cut-and-Choose protocol could be extended to three
players, and asked whether it can be generalized for any number of
agents~\cite{btbook}.  This was resolved affirmatively for
proportional allocations by Banach et al. \cite{ste} and
initiated an interesting line of research for envy-free allocations;
see Brams and Taylor~\cite{btbook}. 
See Procaccia~\cite{pro2,pro1} for a recent survey from a computer
science perspective.

A central problem has been finding envy-free protocols. In 1988,
Garfunkel~\cite{gar} even called this among the most important
problems in 20th century mathematics. An envy-free protocol for any
number of agents was proposed by Brams and Taylor~\cite{bt}, but may
need an unbounded number of queries even for four agents. They thus
posed the question whether there are bounded, envy-free protocols for
$n \geq 4$. Aziz and Mackenize~\cite{am} recently provided a bounded
protocol for $n = 4$. The case for $n > 4$ remained open until recently 
when Aziz and Mackenize announced a discrete and bounded protocol~\cite{am2}. 
Although bounded, their solution has a very high multiple-exponential query 
complexity and it remains an open question whether there are more 
efficient protocols.

The query model was formalized by Roberton and Webb~\cite{rw}. 
The focus in this model has been to minimize the number of
$\cut$ and $\eval$ queries. The $\cut$ query, formally $\cut(x, y, \alpha)$, asks 
an agent for a cut point $x$ such that the subinterval $[x, y]$ has value $\alpha$ 
to that agent and the $\eval$ query, formally $\eval(x, y)$ asks the agent for 
her ver valuation of the subinterval $[x, y]$. It 
is possible to encode all (decentralized) cake cutting protocols using
this model. Note that these queries do not result in an allocation themselves 
but rather provide information about the agents' valuations in order 
to compute an allocation. The query complexity of a cake cutting 
problem is thus the worst case number of queries required in order to output an 
allocation satisfying the desired fairness criteria. 

Procaccia~\cite{pro3} gives a lower bound of $\Omega(n^2)$ for
envy-free cake cutting, which shows a gap between this and the
$\Theta(n \log n)$ bound known for the proportional cake cutting
problem~\cite{edp}.
\if 0

Since the algorithm by Brams and Taylor, progress for the envy-free
cake cutting has involved restricting various aspects of the problem
as well as focusing on real-world applications. On such restriction
has been on the valuation functions. Piecewise uniform functions,
where each agent can divide up the cake into finitely many pieces such
that each piece has value either $c$ or $0$, is one of the simplest
functions to consider. That is, each agent is equally interested in
certain pieces and have no value for the remaining pieces. As it turns
out, however, this consideration does not simplify the complexity of
the envy-free cake cutting problem. In particular, in
\fi
Special cases of the problem have also received attention.
Kurokawa et al.~\cite{klp} for example
establish that an algorithm that computes
an envy-free allocation for $n$ agents with piecewise uniform
functions using $f(n)$ queries would also
be able to compute an envy-free
allocation with general valuation functions using at most $f(n)$
queries.
%
Considering only
contiguous allocations,
Stromquist~\cite{stro} shows that for any $n \geq 3$, there are no
finite envy-free cake cutting algorithms,
even with  an unbounded number of queries. 
A positive result has been achieved by considering approximately
envy-free allocations \cite{pro2}. An allocation is said to be
$\epsilon$-envy-free if each agent values the allocation of any other
agent to be at most $\epsilon$ more than their own. This relaxation
provides a simple algorithm for any $n$ using $n \lceil1 \slash
\epsilon \rceil$ cut queries, as shown in \cite{pro1}.
%
Alon~\cite{al} and Dubins  and Spanier~\cite{dp} also 
study allocations under 
alternate notions of fairness.
%
%

Due to the difficult nature of the envy-free cake cutting problem,
researchers have imposed restrictions on different aspects in order to
give useful protocols and gain insight into the problem. A few
examples are: restricting valuation functions to be only piecewise
constant or uniform~\cite{klp}, relaxing envy-freeness to
approximate envy-freeness \cite{pro1}, considering partial allocations
that simultaneously satisfy envy-freeness and proportionality
\cite{tjc}, and limiting allocations to be contiguous pieces \cite{stro}.

Incentive compatibility is not a standard consideration
in the cake cutting literature. One important
exception is the work of 
Chen et al.~\cite{tjc}, who  give a polynomial time algorithm outside of
the Robertson-Webb model for finding proportional and
envy-free allocations for  piecewise uniform valuation functions
and
any $n$, while
also achieving strategyproofness.
%
%
%

To the best of our knowledge, there is no preexisting work on the
cake cutting problem in which fairness is determined via comparisons
defined by an underlying graph.
In perhaps the most closely related paper, Chevaleyre et al. \cite{cem}
analyze how the network topology of negotiation
affects the convergence to an envy-free allocation for
multiple indivisible goods. 
In their setting, agents are only able to negotiate with, and also 
envy agents that are in their neighborhood. They ask under what 
conditions a sequence of negotiations can lead to a state where there
is no envy.  There are a number of differences between this work and ours:
first, they consider indivisible goods, which leads to a different
set of questions; and second, in their setting the network constrains
not just the comparisons that are made in determining fairness, but
also the allowable interactions in the division protocol.

\section{Relating Global and Local Properties}\label{prelim}

Let $N=\{1, 2, \cdots, n \}$ denote the set of agents.  
The cake is represented using the interval $[0, 1]$ and a \emph{piece
  of cake} is a finite union of non-overlapping (interior disjoint)
subintervals of $[0, 1]$. Allocated pieces are a finite union of
subintervals.
Each agent $i$ has a valuation function $V_i$ that maps subintervals
to values in $\mathbb{R}$. Given subinterval $[x, y] \subseteq [0,
1]$, we write $V_i(x, y)$ instead of $V_i([x, y])$ for simplicity. We
assume that valuation functions are additive, non-atomic, and
non-negative.  Non-atomicity gives us $V_i(x, x) = 0$ for all $x
\in [0, 1]$, so we can ignore boundaries
when defining cut-points. 
We  normalize  valuations so that 
$V_i(0, 1) = 1$ for each agent $i$. 
%
%

\begin{defi}[Allocation]
  An \emph{allocation} is a partition of the $[0,1]$ interval into $n$ pieces
  $\{A_1, A_2, \ldots, A_n\}$ such that $\cup_i A_i = [0, 1]$ and the
  pieces are pairwise disjoint. Each agent $i$ is assigned the
  corresponding piece $A_i$.
\end{defi}

As is standard, this ensures that the entire cake is allocated. If we
remove this constraint, then we can have trivial solutions that
satisfy fairness, such as assigning each agent nothing in the case of
envy-freeness.  This assumption is a natural one to make since the
valuation functions are assumed to be non-negative and additive 
and thus satisfy free-disposal. A \emph{feasible allocation} is one 
where no subinterval is assigned to more than one agent.  

In order to avoid direct revelation of valuation functions, which may be 
cumbersome, cake cutting procedures are typically given as protocols 
that interact with the agents in order to output a feasible allocation. 
The \emph{Robertson-Webb query model}, which is typically used for 
cake cutting protocols is defined with the following two types of queries:
\begin{itemize}
\item $\eval_i(x, y)$; this asks agent $i$ for the valuation $V_i(x, y)$.
\item $\cut_i(x, y, \alpha)$: given $y, \alpha \in [0, 1]$, this asks agent 
$i$ to pick $x \in [0, 1]$ such that $V_i(x, y) = \alpha$. 
\end{itemize}

These queries are used to gather information regarding the valuations
of the agents and need not directly determine an allocation. Rather, a
cake cutting protocol can use other steps for determining
allocations. 
%
%
%
%
\begin{defi}[Query complexity]
  The \emph{query complexity} of a cake cutting protocol is the worst
  case number of queries that the protocol requires to output an
 allocation over all possible valuation
  functions.
\end{defi}

The query complexity of a cake cutting problem is the minimum query
complexity over all known protocols for computing the desired
allocation.

\subsection{Global and Local Fairness}

Given a set of agents and an allocation $\mathcal{A} = (A_1, A_2,
\cdots, A_n)$, we formally define two global fairness criteria:
\begin{defi}[Proportional, Envy-free]
An allocation $\mathcal{A}$ is  \emph{proportional} if $V_i(A_i) \geq 1/n$, for all $i \in N$, and is \emph{envy-free} if $V_i(A_i) \geq V_i (A_j)$, for all $i, j \in N$. 
\end{defi}

Suppose we are given a directed graph $G = (V, E)$, where the nodes
correspond to agents and edges signify relations between the agents.
In particular, we assume that given a directed edge $(i, j)$, agent
$i$ can view agent $j$'s allocation. Agent $i$'s
{\em neighborhood} is the set of all nodes to which it has directed
edges $(i,j)$, and we denote this set of nodes by $N_i$.
We define $i$'s {\em degree} to be $d_i = |N_i|$.
We  define local
analogues for  fairness concepts:
\begin{defi}[Local proportional, local envy free]
\label{lf}
Given a graph $G$, an allocation $\mathcal{A}$ is \emph{locally
  proportional} if $V_i (A_i) \geq \dfrac{\sum_{j \in N_i}
  V_i(A_j)}{|N_i|}$ for all $i$ and $j \in N_i$
and
\emph{locally envy-free} if $V_i(A_i) \geq V_i(A_j)$.
\end{defi}

In a locally proportional allocation, each agent assigns as much value
to her allocation as the average value she has for a neighbors'
allocation. In a locally envy-free allocation, each agent values her
allocation at least as much as her neighbors' allocation.

When $G = K_n$, the complete graph on $n$ vertices, these local
fairness definitions coincide with their global analogues. Whereas, 
if $G= I_n$, the empty graph on $n$ nodes, then any allocation
is trivially locally envy-free. So, the graph topology plays a significant 
role in computing locally fair allocations. 
\begin{lm}
  A locally envy-free allocation $\mathcal{A}$ on some graph $G$ is
  also locally envy-free on all subgraphs $G' \subseteq G$.
\end{lm}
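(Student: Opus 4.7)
The plan is to reduce the lemma to a direct containment of constraints, using nothing beyond the definition of local envy-freeness and the notion of subgraph. First, I would unpack what ``$G' \subseteq G$'' means in this paper: $G'$ is on the same node set (agents) as $G$, and its edge set is a subset of the edge set of $G$. Consequently, for every agent $i$, the out-neighborhood $N_i^{G'}$ in $G'$ is a subset of the out-neighborhood $N_i^G$ in $G$.

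Next, I would translate local envy-freeness on each graph into the corresponding family of inequalities via Definition~\ref{lf}. Being locally envy-free on $G$ means that for every $i$ and every $j \in N_i^G$, the inequality $V_i(A_i) \geq V_i(A_j)$ holds. Since $N_i^{G'} \subseteq N_i^G$, the inequalities that define local envy-freeness on $G'$ form a sub-family of the inequalities that define local envy-freeness on $G$.

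Therefore, if $\mathcal{A}$ satisfies all the envy-free inequalities dictated by $G$, it automatically satisfies the (fewer) inequalities dictated by $G'$, and hence is locally envy-free on $G'$. The argument is purely set-theoretic on the constraint system, so there is no genuine obstacle; the only thing to be a little careful about is that we are using directed graphs (as introduced just before Definition~\ref{lf}), so ``subgraph'' and ``neighborhood'' should be interpreted via directed edges $(i,j)$, and the containment $N_i^{G'} \subseteq N_i^G$ still goes through for each agent $i$ independently.
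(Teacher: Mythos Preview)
Your proposal is correct and matches the paper's own proof essentially verbatim: both arguments simply observe that every edge $(i,j)$ of $G'$ is also an edge of $G$, so the inequality $V_i(A_i)\geq V_i(A_j)$ required for local envy-freeness on $G'$ is already guaranteed by local envy-freeness on $G$. There is nothing to add.
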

\begin{proof}
  We want to show that given a node $u$ and $v \in N_u$, $u$ does not
  envy $v$'s allocation in $G'$. This follows from the fact that
  $\mathcal{A}$ is a locally envy-free allocation on $G$, and if $(u,
  v)$ is an edge in $G'$, then it is also an edge in $G$.
\end{proof}

One consequence of this lemma is that local envy-freeness is implied by
global envy-freeness. 
Since globally envy-free allocations exist for all sets of
agent valuations \cite{al}, 
a locally envy-free allocation exists for every graph $G$ and every
set of agent valuations.  

\begin{lm}
  If an allocation $\mathcal{A}$ is locally envy-free on a graph
  $G$, then it is also locally proportional on the same graph.
\end{lm}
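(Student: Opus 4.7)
The plan is to derive local proportionality directly from local envy-freeness by averaging the envy-free inequalities over the neighborhood of each agent. Specifically, I would fix an arbitrary agent $i$ and, using the hypothesis that $\mathcal{A}$ is locally envy-free on $G$, write down the inequality $V_i(A_i) \geq V_i(A_j)$ for every $j \in N_i$. This gives $|N_i|$ valid inequalities, one per neighbor.

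Next, I would sum these inequalities and divide both sides by $|N_i|$, yielding
\[
V_i(A_i) \;\geq\; \frac{\sum_{j \in N_i} V_i(A_j)}{|N_i|},
\]
which is precisely the definition of local proportionality at node $i$. Since $i$ was arbitrary, the allocation is locally proportional on $G$.

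The only edge case worth checking is when $|N_i| = 0$, i.e.\ $i$ is an isolated node in $G$; in that case the local proportionality condition for $i$ is vacuous (there is no neighbor average to exceed), so no separate argument is required. There is no real obstacle here: the argument is a one-line averaging of $\max \geq \mathrm{average}$ applied to the neighborhood values, and it mirrors the classical observation that global envy-freeness implies global proportionality.
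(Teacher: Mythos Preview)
Your proposal is correct and matches the paper's own proof essentially line for line: fix an agent $i$, invoke $V_i(A_i)\ge V_i(A_j)$ for each $j\in N_i$, and average over the neighborhood to obtain the local proportionality inequality. The paper does not explicitly address the isolated-node case, so your treatment is in fact slightly more complete.
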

\begin{proof}
  If an allocation $\mathcal{A} = (A_1, A_2, \ldots, A_n)$ is locally
  envy-free, then for any $i \in V$, $V_i(A_i) \geq V_i(A_j), \forall
  j \in N_i$. Therefore, $V_i (A_i) \geq (\sum_{j \in N_i}
    V_i(A_j))/|N_i|$.
\end{proof}

Therefore, locally proportional allocations also exist. By
considering $G = K_n$, we also recover that global envy-freeness
implies global proportionality.
While global envy-freeness implies local envy-freeness, global
proportionality does not necessarily imply local proportionality, or
vice versa, the former of which violates intuition. We provide a counter example.
\begin{eg}
  Let $n = 4$ and $G = C_4$, the cycle graph on 4 nodes, where the nodes are labeled clockwise. 
Assume agents 2, 3, and 4
  have the uniform valuation function $V_i(x, y) = |y - x|$ for any subinterval
  $(x, y) \subseteq [0, 1]$. Let agent 1 have the piecewise uniform
  valuation function where $V_1\({0, 1/4}\) = 1/2$ and $V_1\({ 3/4, 1} \) = 1/2$, and no value for the remaining subinterval. It is easy to verify that
  the following allocation is locally proportional on $K_4$,
$$\mathcal{A} = \({[0, 1 \slash 8), [1 \slash 8, 3 \slash 8), [3 \slash 8, 5 \slash 8), [5 \slash 8, 1]}\).$$ 
In particular, $V_i(A_i) = 1 \slash 4$ 
for $i \in \{1, 2, 3\}$ and $V_4(A_4) = 3 \slash 8$. 
This allocation is however not locally proportional on $C_4$ 
since $V_1(A_1 \cup A_2 \cup A_4) = 1$, but
$V_1(0, 1 \slash 8) = 1 \slash 4 < 1/3$. It is also not locally envy-free since $V_1(A_4) > V_1(A_1)$. 
\label{ex:5}
\end{eg}

\omt{
We also show that a locally proportional allocation on $G \neq K_n$ is
not necessarily globally proportional. 
\begin{eg}
  We again consider $G = C_4$, and agents 2, 3, and 4 with the same
  valuation functions as in Example~\ref{ex:5}.
 Assume that agent 1 has the piecewise uniform
  function where $V_1(0, 1 \slash 2) = 1$, and $V_1 (1
  \slash 2, 1) = 0$. The following allocation is a locally proportional
allocation on $C_4$:
$$\mathcal{A} = \( { [0, 1 \slash 10), [4 \slash 10, 7 \slash 10), [1 \slash 10, 4 \slash 10), [7 \slash 10, 1]} \).$$ 
It is easy to check that this allocation is locally proportional, but not 
globally so. 
\end{eg}
}

We prove a stronger result regarding any pair of distinct graphs. Note by $N_i(H)$ we mean agent $i$'s neighborhood set in graph $H$. 

\begin{thm}
  Given any pair of distinct, connected graphs $G, H$ on the same
  set of nodes, there exists a valuation profile of the agents and an 
  allocation $\mathcal{A}$ such that $\mathcal{A}$ 
  is locally proportional on $G$ but not on $H$. 
\end{thm}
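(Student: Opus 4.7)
The plan is a case analysis based on where $E(G)$ and $E(H)$ differ. Since $G \neq H$, the symmetric difference $E(G) \triangle E(H)$ is non-empty, so either (A) there exists $(i,j) \in E(H) \setminus E(G)$, or (B) $E(H) \subsetneq E(G)$, in which case I pick an arbitrary $(i,j) \in E(G) \setminus E(H)$. In both cases, I will use the same underlying allocation --- the equal partition $A_k = [(k-1)/n,\, k/n]$ --- and give every agent $m \neq i$ the uniform valuation $V_m(x,y) = y - x$, so that $V_m(A_\ell) = 1/n$ for all $\ell$. Local proportionality at every such $m$ then holds with equality on both $G$ and $H$, so the entire argument reduces to designing $V_i$ so that local proportionality at $i$ holds on $G$ but fails on $H$.

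For Case (A), I concentrate all of agent $i$'s mass on the interval $A_j$, so that $V_i(A_j) = 1$ and $V_i(A_\ell) = 0$ for every $\ell \neq j$. Since $j \in N_i(H)$, the average of $V_i$ over $i$'s $H$-neighbors is at least $1/|N_i(H)| > 0$, while $V_i(A_i) = 0$; this violates LP on $H$. Since $j \notin N_i(G)$ and $V_i$ vanishes on every other piece, the corresponding $G$-average is $0 = V_i(A_i)$, so LP on $G$ is satisfied with equality.

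Case (B) is the main obstacle, because enlarging a neighborhood generally looks like it should tighten, not loosen, an averaging constraint. The key point is that augmenting $i$'s neighbor set by a node whose piece she values at $0$ drags the average down. Let $d = |N_i(H)| \geq 1$ and $t = |N_i(G) \setminus N_i(H)| \geq 1$ (the bound $d \geq 1$ uses that $H$ is connected on $n \geq 2$ nodes). I set $V_i(A_k) = 0$ for every $k \in N_i(G) \setminus N_i(H)$ and for every non-neighbor of $i$, and distribute the remaining mass by putting $V_i(A_i) = 1/(d+t+1)$ and $V_i(A_k) = (d+t)/(d(d+t+1))$ for each $k \in N_i(H)$; a direct check shows these values sum to $1$. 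In $G$, the extras in $N_i(G) \setminus N_i(H)$ contribute $0$ and pull the average down to exactly $V_i(A_i)$, so LP on $G$ holds tightly. In $H$, the average equals $(d+t)/(d(d+t+1)) > 1/(d+t+1) = V_i(A_i)$, so LP on $H$ fails strictly.

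Combining the two cases, in either situation the constructed valuation profile and allocation witness that $\mathcal{A}$ is locally proportional on $G$ but not on $H$. The main conceptual content of the plan is Case (B), which makes explicit how local proportionality can actually be \emph{relaxed} by adding edges whose endpoints receive low-valued pieces --- exactly the phenomenon that prevents any clean containment relation in Problem~\ref{p0}.
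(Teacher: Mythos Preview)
Your proposal is correct and follows essentially the same approach as the paper: the same case split (an edge in $E(H)\setminus E(G)$ versus $H$ a strict subgraph of $G$), the same equal-partition allocation with uniform valuations for all agents except the distinguished $i$, and the same idea of concentrating $i$'s mass so that local proportionality holds with equality on $G$ and fails on $H$. The only cosmetic difference is that you spell out the equal distribution of mass over $N_i(H)$ in Case~(B), whereas the paper only records the total; with $d+t=|N_i(G)|$ your numbers coincide exactly with the paper's.
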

\begin{proof}
  First, consider the case where $H$ is a strict subgraph of $G$. Pick a node
  $i$ such that $|N_i(H)| < |N_i(G)|$. Let $N_i(G) = \{i_1, i_2, \cdots, i_k\}$ and 
  $N_i(H) = \{i_1, i_2, \cdots, i_\ell\}$ for some $\ell < k$.
  Assume that all other nodes besides $i$ have a uniform valuation
  function over the entire cake. Then, the allocation 
$A_j = \({ (j-1)/n, j/n }\)$ is locally 
proportional from the perspective of every other agent $j \in N$ on 
both $H$ and $G$. Now, define $i$'s valuation function to be the 
piecewise uniform valuation function where, 
$V_i\((i-1)/n, i/n\) = 1/(|N_i(G)| + 1)$ and
  $V_i(A_{i_1} \cup A_{i_2} \cdots \cup A_{i_\ell}) = 1 -
  {1/(|N_i(G)| + 1)}$. Agent $i$'s valuation for the allocation of nodes
$\{i_{\ell+1}, i_{\ell+ 2}, \cdots, i_{k}\}$ as well as $V(G) \backslash N_i(G)$ is $0$. 
This allocation $\mathcal{A}$ is therefore locally proportional on $G$. 
For $\mathcal{A}$ to be locally
  proportional on $H$, we need $V_i(A_i) \geq 1 \slash |N_i(H)|$. However, 
since $|N_i(H)| < |N_i(G)|$, and $V_i(A_i \cup A_{i_1} \cup A_{i_2} \cup \cdots \cup A_{i_{\ell}} )= 1$, 
we only have that $V_i(A_i) < 1/(|N_i(H)| + 1)$.

  Now, suppose $H \nsubseteq G$. Then, there exists an edge $(i, j)$ in
  the edge-set of $H$ that is not in the edge-set of $G$. Assume that
  all nodes $k \neq i$ have a uniform valuation over the entire cake. 
  Suppose further that we have the allocation where each $k$ is
assigned the piece $A_k = \({ (k-1)/n, k/n }\)$. As above, 
this allocation is locally proportional from the perspective of each 
agent $k$ on both $G$ and $H$. Define $i$'s valuation function
to be $V_i\({ (j-1)/n, j/n}\) = 1$ and $0$ on the 
remainder of the cake. Then,
  $V_i(A_i) = 0$ and $V_i (A_k) = 0$ for all $k \neq j$. Since $j \notin N_i(G)$,
 this allocation is locally 
proportional on $G$. However, it is not locally proportional on $H$
since $V_i\( { \cup_{\ell \in N_i}A_{\ell} } \) = 1$, but $V_i(A_i) = 0$. 
\end{proof}

\begin{figure}[t]
\centering
\begin{tikzpicture}[node distance=1cm, auto,]
 \node[punkt, inner sep=5pt] (GEF) {\small{Global\\ Envy-freeness}};
 \node[punkt, inner sep=5pt,below=1cm of GEF] (GP) {\small{Global\\ Proportionality}};
 \node[punkt, inner sep=5pt, right= 2cm of GEF] (LEF) {\small{Local\\ Envy-freeness}};
 \node[punkt, inner sep=5pt,below=1cm of LEF] (LP) {\small{Local\\ Proportionality}};

    \path [->] (GEF) edge[thick] (LEF);   
    \path [->] (LEF) edge[thick] (LP);   
    \path [->] (GEF) edge[thick] (GP);   

\end{tikzpicture}
\caption{Relationship Between Fairness Concepts}

\end{figure}
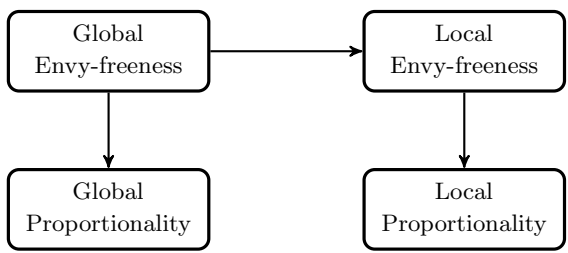

\section{Envy-Free Network Allocations}
\label{efna}

In this section, we consider the question of finding efficient protocols 
for computing locally envy-free allocations. We assume that graphs $G$ 
are directed, unless specified otherwise. When we mean the 
{\em component} 
of a directed graph, we will instead take the graph obtained by 
replacing each directed edge with an undirected one, and a component 
in the directed graph is the corresponding subgraph to the connected 
component in the undirected analogue. We will use {\em strongly
  connected component} when we mean to take directed reachability into
account. 

\begin{lm}\label{hgg'}
Suppose we have a bounded protocol for computing locally envy-free
allocations on $G$. The same protocol can be used to compute locally
envy-free allocations on the following two classes of graphs: 
(i) $H = G \cup G'$ where $G$ and $G'$ are disjoint components, 
and (ii) when $H$ is a
graph with a directed cut such that every edge across the cut goes from a node
in $G$ to a node in $H \backslash G = G'$.  
\end{lm}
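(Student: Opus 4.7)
The plan is to derive a bounded protocol for $H$ by composing the given protocol for $G$ with a sub-protocol for $G'$ (obtained inductively within the characterization being built in this section).

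For (i), since $G$ and $G'$ are disjoint components of $H$, each agent's neighborhood lies entirely within one component, so the local envy-freeness constraints decouple across the two components. The combined protocol partitions the cake $[0,1]$ into two subintervals, say $[0,1/2]$ and $[1/2,1]$, runs the given $G$-protocol on the first subinterval with valuations renormalized over that subinterval, and runs the (inductively obtained) $G'$-protocol on the second. Since envy-freeness is a scale-invariant comparison, and since there are no cross-component edges in $H$, the resulting allocation is locally envy-free on all of $H$; the query complexity is the sum of the two sub-protocols' complexities.

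For (ii), the crucial asymmetry is that agents in $G'$ have no outgoing edges back to $G$, so $G'$-agents face envy constraints only from neighbors within $G'$, while agents in $G$ may envy neighbors on either side of the cut. The plan is a two-stage procedure: first, run the $G$-protocol on the full cake $[0,1]$ to obtain a locally envy-free allocation $(B_i)_{i \in G}$ among $G$-agents; second, carry out a coordinated redistribution in which each $G$-agent $i$ subdivides $B_i$ into a retained portion and portions released to $G'$-agents, preserving within-$G$ envy-freeness by shrinking all pieces proportionally in each agent's own valuation, while introducing cross-cut non-envy from $G$ into $G'$.

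The main obstacle is the cross-cut condition, since a $G'$-agent's final piece aggregates contributions from multiple $G$-agents and each contributor $i$ directly controls only its own contribution in its own valuation. I would resolve this with a Selfridge--Conway-style trim-and-recurse: after an initial subdivision into equal-value sub-pieces, each $G$-agent trims any cross-cut sub-piece they would envy, and the trimmings are re-allocated by invoking the same procedure on the strictly smaller remaining cake. Because each trim irrevocably reduces the relevant envy gap by a fixed fraction, the recursion terminates in a bounded number of rounds, preserving bounded query complexity for the overall protocol on $H$.
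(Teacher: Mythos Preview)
The paper's proof is a one-liner that you missed: simply run the given $G$-protocol on the whole cake for the agents in $G$, and hand every agent in $G'$ the empty piece. In case (i) there are no edges between $G$ and $G'$, and in case (ii) the only cross-cut edges point \emph{from} $G$ \emph{into} $G'$; in either case every envy constraint that involves a $G'$-agent's piece is of the form $V_i(A_i)\ge V_i(A_j)$ with $A_j=\emptyset$, which holds trivially. That is the entire argument, and it matches the lemma's phrasing that ``the same protocol'' suffices.

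Your proposal has two genuine gaps. First, in part (i) you invoke an ``inductively obtained $G'$-protocol,'' but the hypothesis of the lemma gives you a bounded protocol for $G$ only; nothing is assumed about $G'$, and no induction is being set up in this section. Splitting the cake and running a nonexistent $G'$-protocol on one half is not available to you.

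Second, in part (ii) you correctly note that $G'$-agents have no outgoing edges to $G$, but then fail to exploit it: instead of giving them nothing, you attempt to build a nontrivial allocation for $G'$ via a Selfridge--Conway style trim-and-recurse. The termination claim (``each trim irrevocably reduces the relevant envy gap by a fixed fraction'') is unsubstantiated, and in general such trimming procedures are \emph{not} bounded beyond three agents --- establishing boundedness here would be vastly harder than the lemma itself. The whole difficulty disappears once you observe that empty allocations for $G'$ already meet every local envy-freeness constraint.
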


\begin{proof}
In both instances, we simply apply the protocol on $G$ and allocate
agents in $G'$ the empty allocation. This is a locally envy-free
allocation on $H$, since no two agents in $G$ envy one another by
the assumption on the protocol, and no agent envies the allocation 
of another agent in $G'$. 
\end{proof}

A few consequences of this lemma are that: 
given a graph $G$ with more than one connected component, 
we can reduce the search for a bounded protocol on $G$ to any one 
of the connected components. Furthermore, 
if $G$ is a \emph{directed acyclic graph} (DAG), then there exists 
at least one node with no incoming edges. Therefore, 
the allocation where such a node gets the entire cake--- or where it is
divided among a set of such nodes ---is locally envy-free.

\subsection{Directed Acyclic Graphs and Their Cones}

We consider a conceptually useful class of graphs for which we can give 
a protocol with query complexity of $O(n^2)$. 

\begin{definition} 
  Given a graph $G = (V, E)$, we say that $G'$ is a \emph{cone} of $G$
  if it is the join of $G$ and a single node $c$, which we call the
  {\em apex}. That is, $G'$ has node set $V \cup \{c\}$, and edge set
  consisting of the edges of $G$, together with undirected edges $(u,
  c)$ for all $u \in V$. We denote the cone $G'$ of $G$ by $G \star c$.
\end{definition}


We consider cones of DAGs. These are the class of graphs where there 
is a single node $c$ that lies on all cycles. 
We now show how to compute a locally envy-free allocation on any
graph that is the cone of a DAG.
\begin{figure}[h!]
\centering
\begin{tikzpicture}
  [scale=0.75,auto=left,every node/.style={circle,fill=gray!20}]
  \node (n1) at (3, 2) {c};
  \node (n2) at (0,0)  {2};
  \node (n3) at (1.5, 0)  {3};
  \node (n4) at (4.5, 0) {n-1};
  \node (n5) at (6, 0)  {n};

  \foreach \from/\to in {n1/n2,n1/n3,n1/n4,n1/n5}
	\draw (\from) -- (\to);
 	\draw[thick, dashed] (n3) -- (n4);
	\path [->] (n2) edge[bend right=30] (n3);
	\path [->] (n2) edge[bend right=35] (n4);
	\path [->] (n2) edge[bend right=45] (n5);
	\path [->] (n3) edge[bend right=35] (n4);
	\path [->] (n3) edge[bend right=45] (n5);
	\path [->] (n4) edge[bend right=30] (n5);

\end{tikzpicture}
\caption{Cone of a Directed Acyclic Graph}
\end{figure}
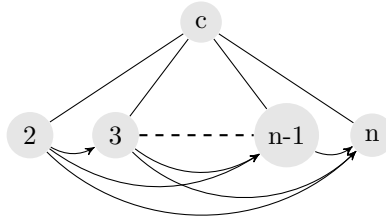

\begin{algorithm}[H]
\floatname{algorithm}{Protocol 1:}
\renewcommand{\thealgorithm}{}  
\caption{Cone of DAGs}
\begin{algorithmic}[1]
\STATE Agent $c$ cuts the cake into $n$ pieces that she values equally. 
\STATE Topologically sort and label the nodes $N \backslash \{c\}$ such that for every edge $(i, j)$, $i \leq j$. 
\STATE Nodes $N \backslash \{c\}$ pick a piece they prefer most in increasing order of their index. 
\STATE Agent $c$ takes the remaining piece.
\end{algorithmic}  
\end{algorithm}

\begin{thm}\label{thm:cdag}
Given a graph $G$ that is a cone of a DAG, Protocol 1 computes a locally envy-free allocation on $G$ using a bounded number of queries. 
\end{thm}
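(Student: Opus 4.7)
The plan is to verify two claims simultaneously: that Protocol 1 terminates with $O(n^2)$ queries in the Robertson--Webb model, and that the final allocation is locally envy-free on $G = H \star c$, where $H$ denotes the DAG induced on $N \setminus \{c\}$. The query bound is immediate from the description of the protocol: Step 1 costs $n-1$ cut queries to $c$, and in Step 3 the $k$-th non-apex picker faces at most $n-k+1$ remaining pieces, so the total number of eval queries is at most $\sum_{k=1}^{n-1}(n-k+1) = O(n^2)$; Step 4 is free.

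For local envy-freeness, I would separate the argument into the apex case and the non-apex case. The directed edges of $G$ are of two kinds: the edges within $H$, and, for each non-apex $u$, the undirected apex edge $(u,c)$, which counts as a directed edge in both directions. Since $c$ cut every one of the $n$ pieces to have value exactly $1/n$ to herself, $c$ assigns equal value to all pieces and thus envies no one along any $(c,u)$ edge. For a non-apex node $i$, the outgoing directed edges of $G$ from $i$ go either to $c$ or to a non-apex node $j$ with $(i,j) \in E(H)$. By the topological sort in Step 2, every such $j$ satisfies $j > i$ and therefore picks strictly after $i$ in Step 3; and $c$ picks last of all in Step 4. Consequently, when $i$ takes her turn in Step 3, the piece that will eventually be allocated to any outgoing neighbor of $i$ is still available. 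Since $i$ chooses a most-preferred piece among the available ones, we obtain $V_i(A_i) \geq V_i(A_j)$ for every outgoing neighbor $j$ of $i$, which is exactly local envy-freeness at $i$.

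The main subtlety is the handling of non-apex neighbors $j < i$ in the topological order: such $j$ do not generate any envy constraint on $i$, because by the topological ordering an edge of $H$ with endpoints $i$ and $j < i$ must be oriented as $(j,i)$ rather than $(i,j)$. Once this observation is in place, the rest is a straightforward sequential refinement of Cut-and-Choose: the apex plays the role of a single global cutter whose uniform valuation of the pieces makes her own envy-freeness automatic, while the topological order of the remaining agents ensures that each non-apex agent picks before any of her outgoing DAG-neighbors do.
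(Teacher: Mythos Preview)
Your proposal is correct and follows essentially the same approach as the paper: the apex's envy-freeness comes from having cut all pieces to equal value, and each non-apex agent's envy-freeness comes from the topological order guaranteeing she picks before any of her out-neighbors (including $c$); the $O(n^2)$ query count is obtained the same way. Your write-up is slightly more explicit about why edges to nodes earlier in the order impose no constraint, but this is just a more detailed presentation of the identical argument.
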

\begin{proof}
We first show that the allocation is locally envy-free. First, there is no envy between agent $c$ and any other agent since agent $c$ cuts the cake into $n$ pieces she values equally. Therefore, her valuation for all the allocated pieces is $1 \slash n$. Each of the other agents picks a piece before $c$, and so are able to pick a piece that they value at least as much as the remaining piece that agent $c$ is assigned. Finally, given any directed edge $(i, j)$ such that $i, j \neq c$, note that $V_i(A_i) \geq V_i (A_j)$, since if such an edge exists, then $i < j$ and thus $i$ selects a piece before $j$. 

To count the number of queries, the first step requires $n-1$ $\cut$ queries by agent $c$. Then, each agent $i$ must perform $n - i + 2$ $\eval$ queries to determine the piece for which they have the highest value. Therefore, the protocol above uses $(n^2 + 3n - 4)/2$ queries. 
\end{proof}


The importance of cones of DAGs can be seen in the following result, 
which shows that they emerge naturally as the characterization of graphs
on which a particular fundamental kind of protocol succeeds.
\begin{defi}
An {\em oblivious single-cutter protocol}
is one in which a single agent $i$ first divides up the cake into a set of pieces
$P_1, P_2, \ldots, P_t$ (potentially $t>n$), and then all remaining operations consist of agents
choosing from among these pieces.
\label{def:single-cutter}
\end{defi}

The classical Cut-and-Choose protocol is an oblivious single-cutter
protocol that works for all sets of valuation functions on the
complete,
two-node graph $K_2$.
Protocol 1
is an oblivious single-cutter protocol
which works for any graph that is the cone of a DAG.  We show that
subgraphs of cones of DAGs are in fact precisely the graphs on which
oblivious single-cutter protocols are guaranteed to produce a locally
envy-free allocation.
%

\begin{thm}
If $G$ is a graph for which an oblivious single-cutter protocol produces
a locally envy-free allocation for all sets of valuation functions,
then $G$ is a subgraph of the cone of a DAG.
\label{thm:single-cutter}
\end{thm}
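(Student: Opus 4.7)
The plan is to take the cutter $c$ as the candidate apex and show that $G\setminus\{c\}$ is a DAG; together with the fact that any DAG extended by adjoining $c$ with all edges to and from the other nodes yields a cone containing $G$ as a subgraph, this gives the theorem. I would argue by contradiction: suppose $G\setminus\{c\}$ contains a directed cycle $v_1\to v_2\to\cdots\to v_k\to v_1$ with $k\ge 2$, and exhibit a valuation profile on which $\Pi$ cannot produce a locally envy-free allocation.

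The central leverage is obliviousness: because $c$ cuts using only $V_c$, the resulting pieces $P_1,\ldots,P_t$ are determined before any other valuation enters the protocol. I would fix $V_c$, fixing the pieces once and for all, and then assign every cycle member $v_i$ a common valuation $V$ (to be chosen below). Chaining the cycle-envy inequalities $V(A_{v_1})\ge V(A_{v_2})\ge\cdots\ge V(A_{v_k})\ge V(A_{v_1})$ collapses them to equality, so any locally envy-free output of $\Pi$ must give every cycle member an allocation of the same $V$-value.

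I would then pick $V$ adversarially. Each $A_{v_i}$ is a disjoint union of some pieces, so writing $\alpha_j=V(P_j)$, the cycle members correspond to pairwise disjoint subsets $S_1,\ldots,S_k\subseteq\{1,\ldots,t\}$ with $\sum_{j\in S_i}\alpha_j$ a common value $\gamma$. Since the $P_j$'s are fixed subintervals and $V$ can be any density on $[0,1]$, I can arrange the $\alpha_j$'s to be linearly independent over $\mathbb{Q}$; then all $2^t$ subset sums are distinct, so the only way pairwise disjoint subsets can yield equal sums is if all the $S_i$'s are empty. In particular, every $v_i$ must receive an allocation of $V$-value zero.

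I expect the main obstacle to be converting this degenerate outcome into a genuine envy violation. Since the total $V$-value is $1$ and the cycle absorbs none of it, some vertex outside the cycle receives pieces of strictly positive $V$-value; if any cycle vertex $v_i$ has an outgoing edge $(v_i,u)\in E(G)$ into such a vertex, then $V(A_{v_i})=0<V(A_u)$ violates local envy-freeness. I would engineer the valuations of the remaining non-cutter agents (and if necessary perturb $V_c$) so that the protocol is forced to place the positive $V$-mass on an out-neighbor of the cycle, ruling out the ``starve the cycle and dump everything elsewhere'' escape by invoking the requirement that the post-cutting operations genuinely distribute the pieces among the agents. The delicate work will be showing that no valuation-dependent reshuffling by $\Pi$ can simultaneously keep the cycle empty and keep every out-neighbor of the cycle empty, since the pieces carry all of $V$'s unit mass and the cycle's out-neighborhood must swallow some of it under any assignment consistent with an oblivious single-cutter protocol.
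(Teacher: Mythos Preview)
Your high-level plan matches the paper's: take the cutter $c$ as the candidate apex, suppose $G\setminus\{c\}$ contains a directed cycle $C=(c_1,\ldots,c_m,c_1)$, fix $V_c$ (hence the pieces $P_1,\dots,P_t$), and then choose the cycle members' valuations adversarially after the pieces are determined. Where you diverge is in the adversarial valuation and, consequently, in where you look for the contradiction.

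The paper gives every agent on $C$ the common valuation $V(P_r)=2\cdot 3^{-r}$ for $r<t$ (with $P_t$ absorbing the remainder), so that each $P_r$ is worth strictly more than $P_{r+1}\cup\cdots\cup P_t$. Then, letting $s$ be the minimum index of any piece assigned to a cycle member and $c_j$ the agent holding $P_s$, the cycle predecessor $c_{j-1}$ can hold only pieces of index greater than $s$, so $V(A_{c_{j-1}})<V(P_s)\le V(A_{c_j})$ and $c_{j-1}$ envies $c_j$. The contradiction is entirely internal to $C$ and takes one line.

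Your $\mathbb{Q}$-linear-independence trick is correct as far as it goes, but it is a detour: it first shows that every cycle member must receive $V$-value zero, and then forces you to hunt for envy along an out-edge from $C$ into the rest of $G$. That is exactly the obstacle you flag and cannot close, and your sketch of ``engineering the remaining valuations to force mass onto an out-neighbor of the cycle'' does not work in general --- nothing in the hypothesis guarantees that $C$ has any out-edges outside itself. The paper's geometric-decay valuation sidesteps this dead end by producing envy \emph{between two cycle members} the moment any single piece lands on $C$; no vertex outside $C$ is ever needed. So the repair for your obstacle is not to chase out-neighbors but to change the adversarial valuation so that the contradiction is internal to the cycle. (You are right that the residual ``starve the cycle entirely'' case deserves attention; the paper's proof simply takes the minimum index $s$ as given and does not separately treat it, so your proposed out-neighbor argument is not the paper's route either.)
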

\begin{proof}
Suppose, by way of contradiction, that $G$ is not a subgraph of the cone
of a DAG, but that there is an oblivious single-cutter protocol on $G$,
in which a node $i$ starts by dividing the cake into pieces using
only knowledge of her own valuation function.
Since $G$ is not a subgraph of the cone of a DAG, the graph
$G_i = G \backslash \{i\}$ is not acyclic, so there is a cycle
$C = (c_1, c_2, \cdots, c_m, c_1)$ in $G_v$.

Let $i$ have a valuation function such that she produces a partition
of the cake into pieces $P_1, P_2, \ldots, P_t$.
Because agent $i$ produces these pieces without knowledge of the valuation
functions of the other agents, we can imagine that we adversarially
choose the valuations of the other agents after these pieces have been
produced.
In particular, consider the valuation functions in which each node
$c_j$ on the cycle $C$ values piece $P_r$ (for $r < t$) at $2 \cdot 3^{-r}$,
and the last piece $P_t$ with the remaining value.
These valuations have the property that for each $r$, the piece $P_r$
is more valuable than the union of all pieces 
$P_{r+1} \cup P_{r+2} \cup \cdots \cup P_t$.

After the protocol is run, each agent $c_j$ on $C$ will get a subset
of the pieces produced by $i$.
Let $s$ be the minimum index of any piece allocated to an agent $c_j$ on $C$.
Then, the agent $c_{j-1}$ who has a directed edge to $c_j$ will have
a union of pieces that she values less than she would value $P_s$, 
and hence envies $c_j$.
This contradicts the assumption that the protocol produces a
locally envy-free allocation on $G$.
\end{proof}

We highlight an important connection between computing 
locally envy-free allocations
on graphs and what is known in the literature as {\em irrevocable 
advantage}. Given a partial allocation, an agent $i$ is said 
to have irrevocable advantage over agent $j$
(or {\em dominate} $j$) if agent $i$ remains unenvious of 
agent $j$'s allocation even if
the entire remaining piece of the cake (the {\em residue}) 
is added to agent $j$'s allocation. 

With the additional 
guarantee that each agent dominates some number of other 
agents, this concept is often 
used to extend partial globally envy-free allocations to complete
ones. For instance, 
it is a key concept in the Aziz-Mackenize protocol for $K_4$. 
Their protocol can be decomposed to three subprotocols: 
Core, Permutation, and Post-Double Domination Protocols, in order. 
The {\em Core Protocol} computes partial 
envy-free allocation where each agent dominates 
at least 
two other agents, while the {\em Post Double Domination Protocol} extends this to a complete allocation.
We will use Protocol 1
to show that given a partial envy-free 
allocation on $K_n$ where each agent dominates 
at least $n -2$ other agents, we can extend the allocation to 
a complete one, thereby generalizing the Post Double Domination 
Protocol~\cite{am} for any $n$. This is presented in Appendix \ref{appa}.

\section{Price of Fairness}\label{pof}

\def\eps{\varepsilon}
\def\A{{\cal A}}

Finally, we consider the 
efficiency of allocation from the perspective of
local fairness.
We follow the approach introduced by Caragiannis {\em
  et al.}~\cite{ckkk} of studying the price of envy-freeness, 
and for this we begin with the following definitions.
Recall that for an allocation $\mathcal{A}$ into pieces
$\{A_1, A_2, \ldots, A_n\}$ for the $n$ agents, we use 
$V_i(A_i)$ to denote agent $i$'s valuation for its piece.

\begin{defi}[Optimality]
  An allocation $\mathcal{A}$, is said to be \emph{optimal} 
  if $\sum_i V_i(\mathcal{A}_i) \geq \sum_i V_i(\mathcal{B}_i)$ for
  any allocation $\mathcal{B}$. We denote this optimal allocation by
  $\mathcal{A}^*$.
\end{defi}

We define the \emph{optimal locally envy-free} (resp. \emph{optimal locally proportional})
allocations, denoted by $\mathcal{A}^{\text{LEF}^*}$ (resp. $\mathcal{A}^{\text{LP}^*}$), 
analogously by imposing the constraint that $\mathcal{A}$
and $\mathcal{B}$ be locally envy-free (resp. locally proportional)
and maximizing sum of the values across all agents. 

\begin{defi}[Price of Local Envy-Freeness, Proportionality]
Given an instance of a cake cutting problem on a graph $G$, the \emph{price of local envy-freeness} is the ratio,
$$\frac{\sum_i V_i(\mathcal{A}_i^*)}{\sum_i V_i\({\mathcal{A}_i^{\text{LEF}^*}}\)},$$
where the sum is over all agents $i \in N.$ We likewise define the price of proportionality by taking the denominator to be 
$\sum_i V_i\({\mathcal{A}_i^{\text{LP}^*}}\)$.
\end{defi}

We are measuring the degradation in efficiency when considering 
allocations that maximize the welfare in both instances under the given 
constraints.
%
To quantify the loss of efficiency, we are interested in giving a
tight lower and upper bound. More specifically, given a graph $G$ and
a fairness concept in consideration, say local envy-freeness, we seek
to find an input (i.e., a valuation profile) for which the price of
local envy-freeness is high. This corresponds to a lower bound on the
price of fairness. On the other hand, the upper bound will be given
via an argument that shows, for any valuation profile,  the price
of fairness cannot exceed that stated.  


The main result on global envy-freeness, due to Caragiannis {\em
  et al.}~\cite{ckkk} is an $\Omega(\sqrt{n})$ lower
bound on the price of (global) fairness: there exist valuation
functions for which the ratio is $\Omega(\sqrt{n})$.
(Very little is know about the upper bound for 
the price of envy-freeness: an upper
bound of $n$ is immediate, and the best known upper bound is $n -
1/2$ \cite{ckkk}.)

\if 0
%
%
Recall that a basic approach here is to study the efficiency loss in
using fair allocations rather than optimal ones.  In particular,
earlier work on this question by Caragiannis {\em et al.}~\cite{ckkk}
defined the {\em price of envy-freeness} for a particular set of
valuation functions to be the ratio of the total welfare of the
optimal allocation $\A^* = (A_1^*, A_2^*, \ldots, A_n^*)$, namely
$\sum_{i \in N} v_i(A_i)$, to the maximum total welfare of any
(globally) envy-free allocation.  The main result here is an
$\Omega(\sqrt{n})$ lower bound on the price of fairness: there exist
valuation functions for which the ratio is $\Omega(\sqrt{n})$
\cite{ckkk}.  (Very little is know about the upper bound: an upper
bound of $n$ is immediate, and the best known upper bound is $n -
1/2$ \cite{ckkk}.)

\fi

These existing results are for the standard model in which each agent
can envy every other agent.  Using our graph-theoretic
formulation, we can study the price of local fairness. 
As defined in Section~\ref{prelim}, this 
is  the ratio of the total welfare of
the optimal allocation to the maximum total welfare of any allocation
that is locally envy-free. 

The numerator of this 
ratio--- based on the optimal allocation ---is independent of $G$, 
while the denominator is a maximum over a set of allocations that is
constrained by $G$.
Now, if we imagine reducing the set of edges in $G$, the set of
allocations eligible for the maximum in the denominator becomes less
constrained; consequently, we would expect that the price of fairness
may become significantly smaller as $G$ becomes sparser.  Is this in
fact the case?
We show that it is not. Our main result is that the lower bound for
global envy-freeness also applies to local envy-freeness on any connected
undirected graph.
\begin{thm}
  For any connected undirected graph $G$, there exist valuation
  functions for which the price of local envy-freeness on $G$ is
  $\Omega(\sqrt{n})$.
\label{thm:local-pof}
\end{thm}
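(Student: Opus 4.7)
The plan is to construct, for every connected graph $G$ on $n$ vertices, a valuation profile whose price of local envy-freeness is $\Omega(\sqrt{n})$, adapting the global construction of Caragiannis et al.~\cite{ckkk}. I will select a set $S \subseteq V(G)$ of $k = \Theta(\sqrt{n})$ ``spike'' vertices, partition the cake into equal intervals $I_1, \ldots, I_k$ of length $1/k$, give spike $s_j$ a valuation concentrated on $I_j$ (density $\sqrt n$ on $I_j$, total value $1$), and give every other (``uniform'') agent the valuation $V(x, y) = y - x$. The optimum assigns $I_j$ entirely to $s_j$ for each $j$, yielding welfare at least $\sqrt n$, so the whole task reduces to choosing $S$ using the structure of $G$ so that every locally envy-free allocation has welfare $O(1)$.

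Given such an $S$, my analysis of envy-free welfare is a length-budget argument. Propagating uniform--uniform envy-freeness (applied in both directions along each edge) forces all uniforms inside any connected component $C_i$ of $G[V \setminus S]$ to have allocations of a common length $L_i$; and each uniform--spike edge $u \in C_i$, $s_j \in S$, gives $|A_u| \geq |A_{s_j}|$, i.e.\ $L_i \geq a_j := |A_{s_j}|$. Combined with the total-length identity $\sum_i |C_i|\,L_i + \sum_j a_j = 1$ and the formula welfare $= 1 + (\sqrt{n}-1)\sum_j a_j$, the problem reduces to upper-bounding $\sum_j a_j$. In the ``symmetric'' regime---when $G[V \setminus S]$ is a single large component of size $\Theta(n)$ adjacent to all of $S$---one obtains $\sum_j a_j \leq k/n = 1/\sqrt{n}$ and hence envy-free welfare $\leq 2 - 1/\sqrt{n}$, giving the desired $\Omega(\sqrt n)$ ratio.

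The main step is to choose $S$ so that this symmetric regime is essentially enforced, which I plan to handle by case analysis on $\Delta(G)$. If $\Delta(G) \geq \sqrt n + 1$, I let $v^*$ be a maximum-degree vertex and take $S$ to be $k$ of its neighbors; then $v^*$ itself remains uniform and, being adjacent to every spike, anchors a single component of $G[V \setminus S]$ containing $v^*$ together with at least one retained uniform neighbor, placing us in the symmetric regime. If $\Delta(G) < \sqrt{n}$, Tur\'an's bound $\alpha(G) \geq n/(\Delta(G)+1) > \sqrt{n}$ guarantees an independent set of size $k$, which I use as $S$; independence ensures each spike's neighbors are all uniforms, and the sparsity of $G$ gives enough room to refine $S$ via greedy swaps so that every component of $G[V\setminus S]$ has no isolated vertex and no small component exclusively collects too many spike-neighborhoods.

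The hardest part will be ruling out the pathological configuration in which some small component $C_{i^*}$ of $G[V\setminus S]$---most worryingly a singleton uniform vertex $v$ with $N_G(v) \subseteq S$---is exclusively adjacent to many spikes, since such a configuration permits an envy-free allocation with $L_{i^*}, a_j \approx 1/(k+1)$ that pushes envy-free welfare back up to $\Omega(\sqrt{n})$ and collapses the ratio. Excluding this in the low-degree case is where the argument becomes delicate; I expect that a careful greedy selection within a large independent set, possibly at the cost of a constant factor in $k$, will suffice, and I can always appeal to the subgraph monotonicity of Lemma~\ref{hgg'} (since local envy-freeness on $G$ implies it on every subgraph) to replace $G$ with a more structured spanning subgraph---for example a spanning tree, or a long path or star within $G$---on which the desired $S$ is easier to construct directly.
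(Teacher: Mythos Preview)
Your valuation construction and length-budget analysis are exactly the paper's: spikes concentrated on disjoint intervals, uniforms elsewhere, then propagate $|A_u|\ge|A_{s_j}|$ along uniform paths and sum. The divergence is entirely in how the spike set is chosen, and there your high-degree case has a real gap.

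You claim that taking $S$ to be $k$ neighbours of a vertex $v^*$ with $\deg(v^*)\ge\sqrt n+1$ ``places us in the symmetric regime,'' i.e.\ that the component of $v^*$ in $G[V\setminus S]$ has size $\Theta(n)$. This is false in general. Consider $G$ consisting of $v^*$ joined to leaves $u_2,\dots,u_{\sqrt n+1}$ and to one further neighbour $u_1$, with $u_1$ attached to a path (or blob) $B$ of the remaining $n-\sqrt n-2$ vertices. If your rule happens to put $u_1$ into $S$, then $G[V\setminus S]$ has two components: $\{v^*,u_{\sqrt n+1}\}$ of size $2$ and $B$ of size $\Theta(n)$. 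Now the only constraints are $L_{C(v^*)}\ge a_i$ for all $i$ and $L_B\ge a_1$. Setting $a_1=0$, $L_B=0$, and $a_2=\cdots=a_{\sqrt n}=L_{C(v^*)}=a$ with $2a+(\sqrt n-1)a=1$ gives $a\approx 1/\sqrt n$ and $\sum_j a_j\approx 1$; one checks this extends to a genuine locally envy-free allocation (give each $u_i$, $i\ge2$, a sub-piece of $I_i$; route $I_1$ away from $v^*$; give $B$ nothing). Its welfare is $\Theta(\sqrt n)$, so the price of local envy-freeness for \emph{this} choice of $S$ is $O(1)$. The problem is not exotic: whenever a neighbour of $v^*$ is a cut vertex separating most of $G$ from $v^*$, putting it into $S$ collapses the argument. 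Your low-degree case is, as you say yourself, only a sketch, and your citation of Lemma~\ref{hgg'} is off --- that lemma is about protocols; the monotonicity fact you want is the earlier observation that local envy-freeness on $G$ implies it on any subgraph.

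What the paper does instead is require a finer structure than ``one large component adjacent to all spikes'': a \emph{$(k,\varepsilon)$-linked partition}, in which the uniforms are split into disjoint sets $S_i$, one per spike $i$, each of size $\Omega(\sqrt n)$, with every $j\in S_i$ reaching $i$ through uniforms only. This condition directly yields $\sum_{j\in S_i}\mu_j\ge |S_i|\,\mu_i$ for each $i$ separately, and summing over the disjoint $S_i$ gives $\sum_i\mu_i=O(1/\sqrt n)$ without any assumption on component sizes. The paper then shows every connected graph admits such a partition by passing to a spanning tree and running a bottom-up labelling procedure --- which is precisely the ``fallback'' you mention in your last sentence, but carried out rather than deferred. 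If you want to salvage your approach, the cleanest fix is to drop the $\Delta(G)$ case split and instead prove directly on a spanning tree that one can pick $\sqrt n$ spikes each owning $\Omega(\sqrt n)$ disjoint uniforms reachable through uniforms; that is the actual combinatorial content.
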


To prove this theorem, we start by adapting a set of valuation
functions that Caragiannis {\em et al}~\cite{ckkk} used in their lower
bound for global envy-freeness.  To argue about their effect on
allocations in an arbitrary graph $G$, we need to reason about the
paths connecting agents in $G$ to others with different valuation
functions.  This, in turn, requires a delicate graph-theoretic
definition and argument: we introduce a structure that we term a {\em
  $(k,\eps)$-linked partition}; we show that if $G$ contains this
structure, then we can carry out the lower bound argument in $G$; and
finally we show that every connected undirected graph contains a {\em
  $(k,\eps)$-linked partition}. 
%
\begin{definition}
For a connected graph $G = (V,E)$, a natural number $k \geq 1$, and
a real number $0 < \eps \leq 1$, 
we define a {\em $(k,\eps)$-linked partition} as follows.  
It consists of a set $L \subseteq V$ of size
$k$, and a partition of $S = V - L$ into sets $\{S_i : i \in L\}$ each
of size at least $(\eps n / k) - 1$, such that for each $j \in S_i$, there
is an $i$-$j$ path in $S \cup \{i\}$. That is, each $j \in S_i$ can
reach $i \in L$ without passing through any other nodes of $L$.
\label{def:linked-partition}
\end{definition}

We next show that if a connected undirected graph $G$ has such 
a structure, with appropriate values of $k$ and $\eps$, 
then we obtain a lower bound on the price of local
envy-freeness on $G$.

\begin{lm}
If a connected undirected graph $G$ has a 
$(k,\eps)$-linked partition with $k = \lfloor \sqrt{n} \rfloor$
and $\eps$, a constant, then 
there exist valuation functions on the nodes of $G$ for which
the price of local envy-freeness is $\Omega(\sqrt{n})$.
\label{lemma:linked-pof-lb}
\end{lm}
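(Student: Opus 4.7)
The plan is to instantiate the classic ``specialist vs.\ uniform'' construction from~\cite{ckkk} using the $(k,\eps)$-linked partition. Write $L=\{i_1,\ldots,i_k\}$ and partition the cake $[0,1]$ into $k$ equal intervals $P_1,\ldots,P_k$ of length $1/k$. I would give each specialist $i_r\in L$ the valuation concentrated on $P_r$, namely $V_{i_r}(X)=k\,|X\cap P_r|$ (which is non-atomic, additive, and totals $1$), and give every $j\in S$ the uniform valuation $V_j(X)=|X|$. Assigning $P_r$ to $i_r$ then yields an allocation of welfare exactly $k=\lfloor\sqrt{n}\rfloor$, which is clearly optimal since trading any portion of $P_r$ to a uniform agent converts $k$ units of value into $1$.

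The central step is to upper-bound the welfare of an arbitrary locally envy-free allocation $\mathcal{A}=(A_1,\ldots,A_n)$ on $G$. I would first observe two consequences of local envy-freeness on undirected edges: (i) for any edge $(u,v)$ between two uniform agents in $S$, the two inequalities $|A_u|\ge|A_v|$ and $|A_v|\ge|A_u|$ combine to $|A_u|=|A_v|$; (ii) for any edge $(u,i_r)$ with $u\in S$ and $i_r\in L$, the uniform side forces $|A_u|\ge V_u(A_{i_r})=|A_{i_r}|$. Combining these along any path $j=v_0,v_1,\ldots,v_t=i_r$ that lies inside $S\cup\{i_r\}$ then yields $|A_j|=|A_{v_0}|=\cdots=|A_{v_{t-1}}|\ge|A_{i_r}|$. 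The defining property of a $(k,\eps)$-linked partition is precisely what supplies such a path for every $j\in S_{i_r}$.

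With this length-propagation in hand, the welfare bound is a short summation. Since the sets $\{i_r\}\cup S_{i_r}$ partition $V$ and $|S_{i_r}|\ge\eps n/k-1$, using $|A_j|\ge|A_{i_r}|$ for $j\in S_{i_r}$ and that the total cake is $1$ gives $\sum_{r}|A_{i_r}|\le k/(\eps n)$. The specialist welfare is then at most $k\sum_{r}|A_{i_r}|\le k^2/(\eps n)\le 1/\eps$, while the uniform agents together contribute at most the total cake length, i.e.\ at most $1$. Thus every locally envy-free allocation has welfare $O(1)$, so the ratio with the optimum $k$ is $\Omega(\sqrt{n})$.

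The main obstacle I anticipate is the propagation step. One must cleanly separate the ``internal'' edges of the $j$-to-$i_r$ path (uniform--uniform, forcing equality) from the unique ``terminal'' edge (uniform--specialist, forcing only an inequality), and crucially exploit that the path avoids \emph{all other} nodes of $L$. Were the path to pass through some $i_{r'}$ with $r'\ne r$, the equality chain would break, because $i_{r'}$'s valuation is concentrated on a different slice and would impose no useful size constraint on its neighbors. Verifying that the $(k,\eps)$-linked partition definition is exactly strong enough to avoid this failure mode, while being weak enough that every connected graph satisfies it, is the delicate point on which the proof turns.
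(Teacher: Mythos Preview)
Your proposal is correct and follows essentially the same approach as the paper: the same specialist/uniform valuation construction, the same path-propagation argument (equality along uniform--uniform edges inside $S$, then a single inequality at the terminal uniform--specialist edge), and the same summation to bound the welfare of any locally envy-free allocation by a constant. Your bookkeeping is in fact slightly cleaner than the paper's, since by summing $|A_{i_r}|+\sum_{j\in S_{i_r}}|A_j|\ge(\eps n/k)|A_{i_r}|$ directly (the ``$+1$'' from including $i_r$ cancels the ``$-1$'' in the size bound) you avoid the paper's intermediate step of assuming $n$ large enough that $\eps\sqrt{n}-1\ge\eps\sqrt{n}/2$.
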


\begin{proof}
Suppose $G$ has a $(k,\eps)$-linked partition consisting of $L$ and
$\{S_i : i \in L\}$, where $k = \sqrt{n}$.  
Thus, each $S_i$ has size at least
$\eps \sqrt{n} - 1$.  
(We will assume for the sake of exposition that $\sqrt{n}$ is an integer,
although it is straightforward to slightly modify the argument if it is not.)

We now use a valuation function adapted from the construction
of Caragiannis et al \cite{ckkk}, who considered the price of global 
envy-freeness.
We partition the full resource to be allocated, the interval $[0,1]$,
into $\sqrt{n}$ disjoint intervals $I_1, \ldots, I_{\sqrt{n}}$.
We will give each $i \in L$ a valuation
$v_i$ that places all value on distinct interval $I_i$,
and each $j \in S$ a valuation that is uniform on
$[0,1]$.  The optimal allocation for this set of valuations has total
welfare of $\sqrt{n}$, which is achieved by giving each $i \in L$ 
the entire interval where
it places value.  

Now let us consider any envy-free allocation $\A = \{A_i : i \in V\}$. 
Let $\mu_i$ be a real number denoting the Lebesgue measure of the set
$A_i$ assigned to node $i$.
If $j \in S$, then $j$'s valuation for its set, $v_j(A_j)$ is equal to $\mu_j$.
If $i \in L$, then $i$'s valuation $v_i(A_i)$ is $\sqrt{n}$ times the
measure of $A_i \cap I_i$; hence $v_i(A_i) \leq \mu_i \sqrt{n}$.

For each $i \in L$, each $j \in S_i$ has a path $P_j$ to $i$ 
entirely through nodes of $S$;
let the nodes on this path, beginning at $i$, be
$P_j = i, j_1, j_2, \ldots, j_d = j$.
The immediate neighbor $j_1$ of $i$ on $P_j$ must satisfy
$\mu_{j_1} \geq \mu_i$, since $j_1$ is in $S$ and hence has a uniform
valuation on intervals.
For each successive $j_t$ on $P$, we must have $\mu_{j_t} = \mu_{j_{t-1}}$,
since $j_t$ and $j_{t-1}$ have the same valuation on all sets, and
the allocation is locally envy-free.
Thus, by induction we have $\mu_{j_t} \geq \mu_i$ for all $t$,
and hence $\mu_j \geq \mu_i$.

We can now derive a set of inequalities that establishes the lower bound.
First we have,
$$\sum_{j \in S_i} v_j(A_j) = \sum_{j \in S_i} \mu_j \geq \sum_{j \in S_i} \mu_i \geq \mu_i (\eps \sqrt{n} - 1).$$
Let us assume $n$ is large enough that 
$\eps \sqrt{n} - 1 \geq \eps \sqrt{n}/2$, so we have,
$$\sum_{j \in S_i} v_j(A_j) \geq \mu_i (\eps \sqrt{n} / 2).$$
Since $v_i(A_i) \leq \mu_i \sqrt{n}$ 
for $i \in L$, we have,
\begin{align}
\sum_{j \in S_i} v_j(A_j) \geq \eps v_i(A_i)/2.
\label{eq:dp}
\end{align}

Thus, the total welfare of the allocation is, 
\begin{align*}
\sum_{h \in V} v_h(A_h) 
& = \sum_{i \in L} v_i(A_i) + \sum_{j \in S} v_j(A_j) \\
&=  \sum_{i \in L} [v_i(A_i) + \sum_{j \in S_i} v_j(A_j)] \\
& \leq  \sum_{i \in L} [(2 \eps^{-1} + 1) \sum_{j \in S_i} v_j(A_j)] \\
& =  (2 \eps^{-1} + 1) \sum_{i \in L} \sum_{j \in S_i} v_j(A_j) \\
& =  (2 \eps^{-1} + 1) \sum_{j \in S} v_j(A_j) \leq  (2 \eps^{-1} + 1),
\end{align*}
where the first inequality is by~\eqref{eq:dp} and the second
since 
$$\sum_{j \in S} v_j(A_j) = \sum_{j \in S} \mu_j \leq 1,$$
because 
all agents in $S$ get disjoint intervals.
Since $(2 \eps^{-1} + 1)$ is a constant, 
while the optimal allocation has total welfare
$\sqrt{n}$, this implies an $\Omega(\sqrt{n})$ lower bound for the price
of envy-freeness on $G$.
\end{proof}

Finally, we establish that every connected undirected graph $G$ has
a $(k,\eps)$-linked partition for appropriate values of $k$ and $\eps$.
We begin by showing that it is enough to find a structure satisfying a
slightly more relaxed definition, in which the set $L$ can have more than 
$k$ elements, and we do not need to include all the nodes of $G$.
Specifically, we have the following definition:

\begin{definition}
For a connected graph $G = (V,E)$, a natural number $k \geq 1$, and
a real number $0 < \eps \leq 1$, 
we define a {\em $(k,\eps)$-linked subpartition} as follows. 
It consists of a set $L \subseteq V$ of size
$\ell \geq k$, together with disjoint subsets 
$S_1, S_2, \ldots, S_\ell \subseteq S = V - L$, each of size
of size at least $(\eps n / k) - 1$, such that for each $j \in S_i$, there
is an $i$-$j$ path in $S \cup \{i\}$. 
\label{def:linked-subpartition}
\end{definition}

The following lemma says that it is sufficient to find a
$(k,\eps)$-linked subpartition.

\begin{lm}
If a connected undirected graph $G$ contains a
$(k,\eps)$-linked subpartition, then it contains a 
$(k,\eps)$-linked partition.
\label{lemma:subpartition}
\end{lm}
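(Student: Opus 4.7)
The plan is to construct a $(k,\eps)$-linked partition by starting from a given $(k,\eps)$-linked subpartition $(L, \{S_i\}_{i \in L})$ with $|L| = \ell \geq k$, and performing two operations: shrinking $L$ to exactly $k$ elements, and then extending the sets $S_i$ to cover all remaining nodes of $V - L$. Concretely, I would first choose an arbitrary subset $L' \subseteq L$ with $|L'| = k$, and initialize $S'_{i'} := S_{i'}$ for each $i' \in L'$. These initialized sets already satisfy the size bound $|S'_{i'}| \geq (\eps n / k) - 1$, and the original path condition is preserved, since any $i'$-$j$ path that lived in $(V - L) \cup \{i'\}$ also lives in the larger set $(V - L') \cup \{i'\}$ because $L' \subseteq L$ implies $V - L \subseteq V - L'$.

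The second step is to absorb the unassigned nodes $U := (V - L') \setminus \bigcup_{i' \in L'} S_{i'}$, which includes both the formerly unassigned nodes in $V - L$ and the discarded nodes of $L \setminus L'$. I would form the graph $G - L'$ and let $C_1, \ldots, C_r$ be its connected components; these partition $V - L'$. Because $G$ is connected (and assuming $L' \ne V$; the edge case $L' = V$ is vacuous provided $\eps \leq 1$), each $C_j$ has at least one neighbor in $L'$. For each such $C_j$ I would pick any such neighbor $i'(C_j) \in L'$ and add the new nodes $C_j \cap U$ to $S'_{i'(C_j)}$.

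To close the argument, I would verify three things. First, disjointness and coverage: the $C_j$ partition $V - L'$, the original $S_{i'}$ are disjoint subsets of $V - L \subseteq V - L'$, and the additions $C_j \cap U$ are exactly what is needed to extend these disjoint sets to a partition of $V - L'$. Second, the path condition for newly added nodes: if $v \in C_j \cap U$ is assigned to $i' = i'(C_j)$, then within $C_j$ there is a $v$-to-(neighbor of $i'$) path, and appending the edge to $i'$ yields a path in $C_j \cup \{i'\} \subseteq (V - L') \cup \{i'\}$, as required. Third, the size: $|S'_{i'}| \geq |S_{i'}| \geq (\eps n/k) - 1$ is immediate since we only add nodes. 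The only mildly subtle point is checking that the original path property survives the shrinking of $L$, but this is automatic from the containment $V - L \subseteq V - L'$; beyond that, the argument is a routine use of connectivity.
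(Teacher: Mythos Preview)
Your argument is correct, but it proceeds differently from the paper's. The paper works in the opposite order: it first extends the subpartition to cover all of $V - L$ by assigning each unassigned node $v$ to $S_i$ where $i \in L$ is the closest element of $L$ to $v$ (the shortest-path choice guarantees no other element of $L$ lies on the $v$--$i$ path), and only afterward shrinks $L$ down to size exactly $k$, removing one element $i$ at a time and reassigning $S_i \cup \{i\}$ by the same shortest-path rule. Your approach---shrink $L$ first to an arbitrary $L'$ of size $k$, then extend using the connected components of $G - L'$---is arguably cleaner: it avoids the iterative removal step, and the component structure makes the path condition for newly assigned nodes immediate rather than requiring the ``closer node would contradict shortest path'' observation. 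Both routes are elementary and yield the same conclusion; they are simply two natural ways to convert a subpartition into a partition.
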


\begin{proof}
We start with a $(k,\eps)$-linked subpartition of $G$, with disjoint sets
$L$ of size $\ell \geq k$, and $S_1, S_2, \ldots, S_\ell \subseteq S = V - L$.
First, for every node $v \not\in L \cup S$, we assign it to a subset $S_i$
as follows: we find the shortest path from
$v$ to any node in $L$; suppose it is to $i \in L$.
We add $v$ to $S_i$.  Note that this preserves the property that
all $S_i$ are disjoint, and $v$ has a path to $i$ that does not meet
any other node of $L$, since if $h \in L$ were to lie on this path,
it would be closer to $v$ than $i$ is.

At this point, every node of $G$ belongs to $L \cup S$.
We now must remove nodes from $L$ to reduce its size to exactly $k$
while preserving the properties of a $(k,\eps)$-linked partition.
To do this, we choose a node $i \in L$ arbitrary, remove 
$i$ from $L$, and remove the set $S_i$ from the collection of subsets.
We then assign each node in $S_i \cup \{i\}$ to an existing subset
$S_h$ exactly as in the previous paragraph.
After this process, the size of $L$ has been reduced by $1$, and we
still have a partition of $V - L$ into subsets $S_i$ with the desired
properties.
Continuing in this way, we can reduce the size of $L$ to exactly $k$,
at which point we have a $(k,\eps)$-linked partition.
\end{proof}

Finally, we prove the following graph-theoretic result, which together
with Lemma \ref{lemma:linked-pof-lb}
establishes Theorem \ref{thm:local-pof}.

\begin{thm}
For every $k \geq 2$ and with $\eps = 1/2$, every 
connected undirected graph has a $(k,\eps)$-linked subpartition.
\label{thm:subpartition-exists}
\end{thm}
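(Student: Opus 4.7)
The plan is to pass to a spanning tree $T$ of $G$ and construct the subpartition on $T$: since every $T$-path is a $G$-path, any $(k,1/2)$-linked subpartition of $T$ is automatically valid in $G$. Root $T$ arbitrarily and write $|T_v|$ for the size of the subtree at $v$. If $n\le 2k$ the size requirement $|S_i|\ge n/(2k)-1$ is at most $0$, so I can take $L=V$ with every $S_i$ empty; assume henceforth $n>2k$.

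The construction is a greedy peeling. At each step I find a non-excluded vertex $v^*$ minimising $|T_{v^*}|$ subject to $|T_{v^*}|\ge n/(2k)$. In Case A, when $|T_{v^*}|\le n/k$, I add $v^*$ to $L$, set $S_{v^*}=T_{v^*}\setminus\{v^*\}$, and delete the entire subtree $T_{v^*}$ from the current tree. In Case B, when $|T_{v^*}|>n/k$, the minimality of $v^*$ forces every child of $v^*$ to have subtree size strictly less than $n/(2k)$, so I greedily bundle these children, closing each bundle as soon as its total size first reaches $n/(2k)$; each such closed bundle has total size in $[n/(2k),n/k-2]$. For each bundle I pick one of its children as a leader (added to $L$), take the remaining bundle vertices as its territory, and delete the bundle; I also mark $v^*$ itself as excluded from $L$ permanently. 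The procedure repeats until $|L|\ge k$ or no eligible $v^*$ remains.

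For path validity: in Case A, the $T$-path from any $j\in S_{v^*}$ to $v^*$ lies inside the peeled subtree, which contains no other leader (earlier leaders were removed with their subtrees, and later leaders will be drawn from the residual tree outside $T_{v^*}$). In Case B, the $T$-path from $j$ in a bundle-sibling's subtree to its bundle leader $w$ has the form $j\to\cdots\to w'\to v^*\to w$; the only hazard is $v^*$, which is barred from $L$ by its exclusion marker, while every other path vertex lies in the territory. The size bound $|S_i|\ge n/(2k)-1$ is immediate from the case analysis.

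The main obstacle I expect is showing the procedure always yields $|L|\ge k$. I plan to argue this by a uniform count: each leader accounts for at most $n/k$ peeled vertices (a Case-A subtree, or a Case-B bundle). The procedure halts only when either the current tree has fewer than $n/(2k)$ vertices, or the root has already been excluded; the latter is possible only after a Case B at the root, which leaves the root's residual subtree of size at most $n/(2k)$. In either scenario at least $n-n/(2k)$ vertices have been peeled, so the number of leaders is at least $(n-n/(2k))/(n/k) = k-\tfrac12$, which, being an integer, is at least $k$. The star graph is the prototypical witness for Case B: no subtree size lies in $[n/(2k),n/k]$, and the bundling construction is precisely what allows the hub vertex to stay out of $L$ while its leaves serve as multiple leaders sharing the hub as a common pass-through.
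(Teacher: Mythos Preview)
Your approach is genuinely different from the paper's: the paper roots a spanning tree, processes nodes bottom-up, and for each node either labels it with a downward-reachable member of $L$ that still has room or adds it to $L$, then prunes $L$ in a final pass. Your greedy peeling idea is natural and, with a repair, can be made to work, but as written it has a real gap.

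The problem is your claim in Case B that ``the minimality of $v^*$ forces every child of $v^*$ to have subtree size strictly less than $n/(2k)$.'' Minimality only constrains \emph{non-excluded} children; an excluded child $c$ (a former Case-B center) satisfies only $|T_c|<1+n/(2k)$, so $|T_c|\ge n/(2k)$ is possible. When such a $c$ is added to an empty bundle, the bundle closes immediately with $c$ as the sole child of $v^*$ it contains, and your rule ``pick one of its children as a leader'' forces you to pick $c$, which is barred from $L$. Concretely, take $k=10$, $n=21$ (so $n/(2k)=1.05$, $n/k=2.1$) and the tree with root $r$, a single child $a$, and $19$ leaves below $a$. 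Case B at $a$ produces $9$ bundles of two leaves (hence $9$ leaders) and one leftover leaf, leaving residual $|T_a|=2$. Now $r$ is the only eligible center, $|T_r|=3>2.1$ forces Case B at $r$, its unique child is the excluded $a$ with $|T_a|=2\ge 1.05$, and the resulting singleton bundle admits no leader; the procedure stalls with $|L|=9<k$.

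The repair is to allow the bundle leader to be \emph{any} non-excluded vertex of the bundle (a current-tree leaf always qualifies, since leaves are never Case-B centers). Path validity survives because the tree path between any two bundle vertices stays inside the bundle except possibly for the single vertex $v^*$, which is excluded and hence not in $L$. You should also revisit the arithmetic: once excluded children are present your bundle-size upper bound ``$n/k-2$'' fails (a bundle can reach size just below $n/k+1$), and the ``residual of size at most $n/(2k)$'' after Case B at the root should be ``at most $1+n/(2k)$.'' The final conclusion $|L|\ge k$ still holds for $n>2k$, but the bounds feeding into ``leaders $\ge (n-n/(2k))/(n/k)$'' need to be redone with these corrections.
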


\begin{proof}
It is enough to find the required structure on a spanning
tree $T$ of $G$, since if the paths required by the definition exist in $T$,
then they also exist in $G$.
Thus, it is sufficient to prove the result for an arbitrary tree $T$.

We root $T$ at an arbitrary node, and let $X$ be the set of leaves of $T$.
If $|X| \geq k$, then we can choose any $k$ leaves of $T$ and partition
the remaining nodes of $T$ arbitrarily into sets of size
$(n-k)/k$ to satisfy the definition.
Otherwise, $|X| < k$.  In this case, we begin by including all nodes
of $X$ in $L$.

Now, we process the nodes of $T$, working upward from the leaves,
so that when we get to a node $v$ in $T$, we have already processed
all descendents of $v$.
Each node is processed once, and at that point we decide whether
to add it to $L$, and if not which set $S_i$ to place it in,
given the current set $L$.
For a node $v$, we say that $w$ is
{\em downward-reachable} from $v$ if $w$ is a descendent of $v$, and if
the $v$-$w$ path in $T$ does not
contain any internal nodes belonging to $L$.

When we process a node $v$, we do one of two things:
\begin{itemize}
\item[(i)] 
We {\em label} $v$ with the name of a node in $L$ that is downward-reachable
from $v$; or
\item[(ii)] We place $v$ in $L$.
\end{itemize}

Let $b = (\eps n / k) - 1$.
We perform action (i) if there is any $w \in L$ that is downward-reachable
from $v$, such that there are not yet $b$ nodes labeled with $w$.  
In this case, we label $v$ arbitrarily with one such $w$.
Since $v$ and all its descendents are now processed, $v$ will continue
to have a path to $w$ that does not pass through any other nodes of $L$.
Otherwise suppose there is no such $w$; that is,
all $w \in L$ that are downward-reachable from $v$ have $b$ nodes
labeled with $w$.
In this case, we perform action (ii).  Note that at this point, 
every $w \in L$ that is a descendent of $v$ has a set $S_w$ of
exactly $b$ nodes, and these nodes can all reach $w$ without passing
through any other node of $L$.

Our procedure comes to an end when we process the root node $v^*$.
There are three cases to consider, the first two of which are straightforward.

First, 
if we place $v^*$ into $L$, then $T - v^*$ is partitioned into $L$ and sets 
$\{S_w : w \in L\}$ such that $|S_w| = b$ for each $w$.
Thus, if we remove $v^*$ from $L$, we have a 
$(k,\eps)$-linked subpartition, since the sets $S_w$ are disjoint and of
size at least $b$, and 
$$|L| = (n-1) / ((\eps n / k) - 1) \geq (n / (\eps n / k)) = k / \eps > k.$$

Otherwise, $v^*$ is labeled with some downward-reachable $u \in L$.
Our second case, which is also straightforward, is that 
after this labeling of the root, all sets $S_w$ for $w \in L$ have
size exactly $b$, then we have a $(k,\eps)$-linked subpartition.

If not, then we are in the third case: 
$v^*$ is labeled with some downward-reachable $u \in L$, and
after this labeling there still exist downward-reachable 
nodes $w$ that we have placed in $L$
that do not have associated sets $S_w$ of size $b$.
We therefore need to prune our set $L$ to a smaller set that has
$|S_w| \geq b$ for each $w \in L$.
The goal is to show that the smaller $L$ we end up with still has
enough elements; if that holds, then we have a
$(k,\eps)$-linked subpartition.

To show this, we proceed as follows.
We say that $w \in L$ is {\em active} if $|S_w| < b$.  We first
observe that any active $w$ must be downward-reachable from the root
$v^*$.  Indeed, if $w$ is active and not downward-reachable from the
root, then there is a $v \in L$ such that $w$ is a descendent of $v$.
But in the step when we placed $v$ in $L$, it was not possible to
label $v$ with $w$, and hence we must have had $|S_w| = b$ at that
point.

Next we claim that there are $< k$ active $w \in L$.
To prove this, for each active $w$, we associate $w$ with a leaf that is a
descendent of $w$.  (This can be $w$ itself if $w$ is a leaf.)
Observe that the same leaf $x$ cannot be associated with two distinct
active $w, w'$, for then on the path from $v^*$ to $x$, one of 
$w$ or $w'$ would be closer to $v^*$, and the other would not be
downward-reachable from $v^*$.
Given that we can associate a distinct leaf to each active $w$,
and there are $< k$ leaves, there are $< k$ active $w \in L$.

We say that a node $w \in L$ is {\em inactive} if it
is not active; that is, if $|S_w| = b$.
Let $L_0$ be the inactive nodes of $L$ and $L_1$ be the active nodes of $L$.
We have
$$|L_0| + \sum_{w \in L_0} |S_w| + |L_1| + \sum_{w \in L_1} |S_w| = n.$$
We know that $|L_1| < k$ and $|S_w| < b$ for each $w \in L_1$;
hence, using the fact that $\eps = 1/2$, we have
$$|L_1| + \sum_{w \in L_1} |S_w| < k + k b = k (b + 1) = k n / (2k) = n/2.$$
It follows that 
$|L_0| + \sum_{w \in L_0} |S_w| > n/2$.
But since $|S_w| = b$ for each $w \in L_0$, we have
$$n/2 < |L_0| + \sum_{w \in L_0} |S_w| = |L_0|(1 + b) = |L_0| n / (2k),$$
from which it follows that $|L_0| > k$.
We now conclude the construction by declaring $L$ to be $L_0$; 
since the sets $S_w$ for $w \in L_0$ are all pairwise disjoint and
each has size at least $b$, we have the desired
$(k,\eps)$-linked subpartition.
\end{proof}

\omt{
We conclude this section by noting that as in the work of
Caragiannis et al \cite{ckkk}, the lower bound is the main result.
Like in their work, we can only provide a mild improvement on
the immediate upper bound of $n$.  
In particular, we show how to generalize their basic upper bound 
of $n - 1/2$ to incorporate the number of connected components of $G$.

\begin{thm}
Given a graph $G$ with $r$ components, the price of local-envy-freeness is at most $n - r/2$. 
\end{thm}

\begin{proof}
Given an instace $G$, consider the corresponding optimal allocation.
If this allocation is locally envy-free, then the price of fairness is
$1$. However, if it is not envy-free, then there must be at least one
agent in each component that envies the allocation of another agent.
Therefore, the value for their allocation in $\mathcal{A}^*$ is at
most one half. Thus, the denominator is at most $n-1/2 \cdot
r$. For the numerator, we know that there exists a globally envy-free
allocation such that the sum of the utilities of the agents is at
least $1$ since a globally envy-free allocation is also globally
proportional. Therefore, the optimal locally envy-free allocation must
have welfare at least $1$, giving us the desired upper bound.
\end{proof}


}

\section{Conclusion and Future Work}\label{conc}

We have introduced a new line of inquiry for the envy-free and
proportional cake cutting problems by considering local notions of
fairness. We show interesting relations between these local fairness
concepts and their global analogues.  Besides introducing this new
model, our main contribution has been to fully classify the class of
graphs for which there is an oblivious single-cutter protocol for
computing locally envy-free allocations.  Furthermore, we quantify the
degredation in welfare resulting from adding the local envy-freeness
constraint on the allocations; in particular, we show that the known
$\Omega(\sqrt{n})$ lower-bound for the (global) price of envy-freeness
continues to hold even for sparse graphs.

It is of interest to give 
efficient protocols for computing locally envy-free allocations on
rich classes of graphs without the single-cutter constraint.  Since
local envy-freeness is a stronger condition than local
proportionality, the same problem can also be considered for locally
proportional allocations. Finally, whether there is a similar lower bound of 
$\Omega(\sqrt{n})$ for the price of local proportionality is an open question. Currently,
the upper bound for both local fairness concepts is the loose $n -
1/2$ bound, and giving tighter bounds is another direction.

\section*{Acknowledgements}
The first author would like to thank Felix Fischer for the introduction 
to the cake cutting problem. We would also like to thank Rahmtin Rotabi 
for insights for the proof of Theorem 4.4 as well as Barabra Grosz, 
Radhika Nagpal, Ariel Procaccia, Ofra Amir, and Adam Brauer for 
helpful discussions and references.

\newpage

\appendix

\section{Taking Advantage of Irrevocable Advantage}\label{appa}

Given a partial envy-free allocation, an agent $i$ is said to 
{\em dominate }
an agent $j$, if $i$ remains envy-free of $j$ even if the entire residue
(the remaining subinterval of the cake) is allocated to $j$. We can thus define:

\begin{defi}
A \emph{domination graph} on $n$ is a graph where $V$ is the set of agents and there is a 
directed edge $(i, j)$ if $i$ has irrevocable advantage over $j$. 
\end{defi}

Constraints on the number of agents each agent must dominate at a
certain stage in the protocol can be used to extend partial envy-free
allocations to complete ones. One salient example is the
Aziz-Mackenzie protocol for $K_4$, where they obtain a partial
envy-free allocation using what they call the
{\em Core Protocol} and the
{\em Permutation Protocol} such that each agent is guaranteed to dominate at
least two other agents.  They then use the Post-Double Domination Protocol to 
extend this to a
complete envy-free allocation.  We generalize this protocol to any
$n$. That is, given a partial envy-free allocation such that each
agent dominates $n-2$ other agents, we can apply Protocol 1 to extend
the allocation to a complete allocation. This provides an alternate proof to a
recent paper by Segal-Halevi et al.~\cite{sha}. We 
first define a special class of graphs.
\begin{defi}
A \emph{pseudoforest} is a graph where each vertex has at most one outgoing edge. 
\end{defi}

Each component of a pseudoforest is a subgraph of a cone of a DAG;
since each node has at most one outgoing edge, there are at most $n$
edges. If there are fewer than $n$, then it is a DAG. If there are
exactly $n$, then there exists a cycle. Find this cycle and remove
an edge $e = (u, v)$ from the cycle. The resulting graph will
be a DAG, and we can therefore Protocol 1 by setting $u = c$. This makes the
protocol key to extending partial envy-free allocations to complete
ones under particular domination criteria.
\begin{lm}
We can extend a partial globally envy-free allocation in which each agent dominates at 
least $n-2$ other agents to a complete, envy-free allocation. 
\end{lm}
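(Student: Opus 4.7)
The plan is to allocate the residue using Protocol~1 after encoding the possible envy relations into a suitable graph. Define the directed \emph{non-domination graph} $D$ on the $n$ agents by placing an edge $(i,j)$ exactly when $i$ does not dominate $j$. Since each agent dominates at least $n-2$ of the other $n-1$ agents, every node of $D$ has out-degree at most $1$, so $D$ is a pseudoforest, and each component is either a tree or a pseudotree containing a single cycle.

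As noted just before the lemma, each such component is a subgraph of a cone of a DAG. In the tree case, take the unique sink (the node with no outgoing edge) as the apex; in the pseudotree case, pick any node $u$ on the cycle, delete its single outgoing cycle edge, and use $u$ as the apex. The remaining non-apex nodes then form a DAG. I would then invoke Protocol~1 on each component, but applied to the residue rather than to the full cake: the apex partitions the residue into $m$ pieces (where $m$ is the component size) that she values equally, the non-apex nodes pick a most-preferred remaining residue piece in topological order, and the apex keeps the leftover.

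To verify that the resulting complete allocation is globally envy-free, I would split on any ordered pair $(i,j)$. If $i$ dominates $j$, then even giving $j$ the entire residue would not induce envy, so awarding $j$ only her residue share certainly preserves $V_i(A_i) \geq V_i(A_j)$; this in particular handles every pair that lies in different components of $D$. Otherwise $(i,j)$ is an edge of $D$, so $i$ and $j$ lie in the same component, and the Theorem~\ref{thm:cdag} analysis of Protocol~1 ensures that $V_i(r_i) \geq V_i(r_j)$ for the residue pieces $r_i$, $r_j$ awarded to $i$ and $j$; together with the partial envy-freeness $V_i(A_i^{\mathrm{partial}}) \geq V_i(A_j^{\mathrm{partial}})$ and the additivity of $V_i$, this yields $V_i(A_i) \geq V_i(A_j)$.

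The main bookkeeping step, and the one place where pseudotrees behave differently from trees, is confirming that when an edge $(u,v)$ is deleted from a pseudotree cycle, the possible envy of $u$ toward $v$ is still controlled. This is immediate: the apex $u$ by construction values all residue pieces equally, so $V_u(r_u)=V_u(r_v)$, and the partial envy-freeness between $u$ and $v$ closes the gap. Everything else follows directly from the envy-free guarantee of Protocol~1 on cones of DAGs.
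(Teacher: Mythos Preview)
Your proposal follows essentially the same route as the paper: both identify the non-domination graph (the paper calls it the complement $G^c$ of the domination graph) as a pseudoforest, invoke Protocol~1 on the residue, and then combine the residue allocation with the given partial allocation. Your case split on whether $i$ dominates $j$ is in fact more explicit than the paper's terse contradiction line, and your treatment of the deleted cycle edge is exactly right.

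One genuine feasibility point is missing. You say you will ``invoke Protocol~1 on each component, applied to the residue,'' with each apex cutting the residue into $m$ equal pieces. If $D$ has more than one component, running this independently on each component hands out the \emph{same} residue multiple times, producing overlapping pieces and hence not an allocation. You need one extra sentence: either first split the residue arbitrarily among the components and run Protocol~1 within each component on its own share, or (using Lemma~\ref{hgg'}) give the entire residue to a single component and assign the empty piece to all other agents. Either fix is harmless for your verification, since your cross-component case already relies only on domination (so agents outside the chosen component cannot envy anyone, and agents inside it cannot envy an agent receiving no residue because of the partial envy-freeness of the $P_i$'s). With that adjustment the argument goes through as you wrote it.
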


\begin{proof}
Suppose we have a partial globally envy-free allocation $(P_1, P_2, \cdots, P_n)$, with 
residue $R$. If each agent dominates at least $n-2$ other agents, then the complement 
of the domination graph, denoted by $G^c$, is a pseudoforest. We apply Protocol 1 on 
$G^c$ using residue $R$ and denote this allocation by $(R_1, R_2, \cdots, R_n)$. We 
want to show that $(P_1 \cup R_1, P_2 \cup R_2, \cdots, P_n \cup R_n)$ is a globally 
envy-free allocation. Suppose it is not. Then, there exists $i, j$ such that 
$V_i(P_i \cup R_i) < V_i(P_j \cup R_j)$, but this is only possible if either 
$V_i(P_i ) < V_i(P_j)$ or $V_i(R_i) < V_i(R_j)$.
\end{proof}

The assumption that each agent dominates at least $n-2$ agents is
necessary. In particular, suppose that there exists one agent that
dominates only $n-3$ other agents, such as in Example~\ref{ex:2}.
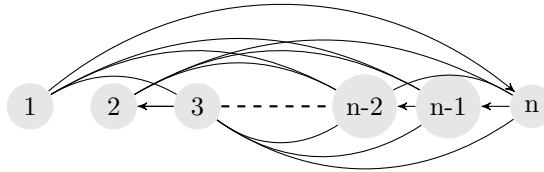
\begin{figure}[ht]
\centering
\begin{tikzpicture}
  [scale=1.1,auto=left,every node/.style={circle,fill=gray!20}]
  \node (n1) at (0, 0) {1};
  \node (n2) at (1, 0)  {2};
  \node (n3) at (2, 0)  {3};
  \node (n4) at (4, 0) {n-2};
  \node (n5) at (5, 0) {n-1}; 
  \node (n6) at (6, 0) {n};

\draw (n1) edge[bend left=30] (n3);
\draw (n1) edge[bend left=30] (n4);
\draw (n1) edge[bend left=30] (n5);
\path [->] (n1) edge[bend left=40] (n6); 

\draw (n2) edge[bend left=30] (n4);
\draw (n2) edge[bend left=30] (n5);
\draw (n2) edge[bend left=30] (n6);

\draw (n3) edge[bend right=35] (n4);
\draw (n3) edge[bend right=35] (n5);
\draw (n3) edge[bend right=35] (n6);
\draw (n3) edge[bend right=0] (n2);
\path [->] (n3) edge[bend right=0] (n2);

\draw[thick, dashed] (n3) -- (n4); 

\draw (n4) edge[bend left=30] (n6);

\path [->] (n5) edge[bend right=0] (n4); 
\path [->] (n6) edge[bend right=0] (n5); 

\end{tikzpicture}
\caption{Counterexample to the extension lemma.\label{fig:counter}}
\end{figure}

\begin{eg}
\label{ex:2}
Suppose that each agent $i \neq 2$ dominates every other agent but agents 
$i+1 \mod n$ and that agent $2$ dominates agents $\{4, 5, \cdots, n\}$. The 
domination graph is given in Figure~\ref{fig:counter}.
The complement of the domination graph is the cycle graph 
$(1, 2, 3, \cdots, n, 1)$ plus the edge $(2, 1)$. It therefore consists 
of more than one simple cycle, and hence a direct application of 
Protocol 1 to the complement of the domination graph will not 
extend a partial allocation to a complete allocation. 
\end{eg}

\end{document}